\pdfoutput=1 % arXiv: force pdflatex (PDF figures)
\documentclass[conference]{IEEEtran}
\def\TECHREPORT{1} % arXiv technical report: full-proofs appendix included
\usepackage{amsmath,amssymb}
\usepackage{amsthm}
\usepackage{booktabs}
\usepackage{graphicx}
\usepackage{algorithm}
\usepackage{algpseudocode}
\usepackage{tikz}
\usetikzlibrary{decorations.pathreplacing}
\usepackage{cite}
\usepackage{balance}
\usepackage[hidelinks]{hyperref}

\theoremstyle{plain}
\newtheorem{theorem}{Theorem}
\newtheorem{lemma}{Lemma}
\newtheorem{corollary}{Corollary}
\newtheorem{proposition}{Proposition}

\theoremstyle{definition}
\newtheorem{remark}{Remark}
\newtheorem{example}{Example}

\newcommand{\ResSkewEqual}{20.9}

\newcommand{\ResSkewEnergy}{19.1}

\newcommand{\ResSkewHarmonic}{18.8}

\newcommand{\ResFadingSkewEqual}{51.3}

\newcommand{\ResFadingSkewEnergy}{45.1}

\newcommand{\ResFadingSkewHarmonic}{44.7}

\newcommand{\ResUbindEqual}{17.2}

\newcommand{\ResUbindEnergy}{17.2}

\newcommand{\ResUbindHarmonic}{16.0}

\newcommand{\ResUbindGainE}{6.5}

\newcommand{\ResUbindfadeEqual}{41.6}

\newcommand{\ResUbindfadeEnergy}{41.6}

\newcommand{\ResUbindfadeHarmonic}{39.8}

\newcommand{\ResUbindfadeGainE}{4.3}

{20.9}\newcommand{\ResSkewEnergy}{19.1}
  \newcommand{\ResSkewHarmonic}{18.8}
  
  \newcommand{\ResFadingSkewEqual}{51.3}\newcommand{\ResFadingSkewEnergy}{44.2}
  \newcommand{\ResFadingSkewHarmonic}{44.7}
  
  \newcommand{\ResUbindEqual}{17.2}\newcommand{\ResUbindEnergy}{17.2}
  \newcommand{\ResUbindHarmonic}{16.0}\newcommand{\ResUbindGainE}{6.5}
  \newcommand{\ResUbindfadeEqual}{41.5}\newcommand{\ResUbindfadeEnergy}{41.5}
  \newcommand{\ResUbindfadeHarmonic}{39.4}\newcommand{\ResUbindfadeGainE}{5.0}
}
\newcommand{\ResSkewLB}{15.8}

\newcommand{\ResFadingSkewLB}{35.7}
\newcommand{\ResUbindLB}{15.2}
\newcommand{\ResUbindfadeLB}{35.9}
{34.0}
  \newcommand{\ResSkewLB}{17.8}
  \newcommand{\ResFadingSkewLB}{37.7}
}

\newcommand{\tmin}{T^{\min}}
\newcommand{\dmax}{d_{\max}}
\newcommand{\cbk}{\discretionary{}{}{}}
\newcommand{\TwtPSM}{\texttt{Twt\cbk Power\cbk Save\cbk Manager}}

\ifdefined\TECHREPORT\IEEEoverridecommandlockouts\fi

\title{Age-Optimal Target Wake Time: Provably Good Wake Schedules for
Energy-Constrained Wi-Fi Status Updating}

\author{%
\IEEEauthorblockN{Haoyu Wang}
\IEEEauthorblockA{\textit{Department of Computer Science}\\
\textit{University of Massachusetts Boston}\\
Boston, MA, USA\\
haoyu.wang001@umb.edu}
\and
\IEEEauthorblockN{Bo Sheng}
\IEEEauthorblockA{\textit{Department of Computer Science}\\
\textit{University of Massachusetts Boston}\\
Boston, MA, USA\\
bo.sheng@umb.edu}
\and
\IEEEauthorblockN{Xiaoqian Zhang}
\IEEEauthorblockA{\textit{Department of Computer Science}\\
\textit{University of Nebraska Omaha}\\
Omaha, NE, USA\\
xiaoqianzhang@unomaha.edu}
\ifdefined\TECHREPORT
\thanks{Accepted for publication in the Proceedings of the 28th International
Conference on Modeling, Analysis and Simulation of Wireless and Mobile Systems
(MSWiM '26), Paris, France, October 26--30, 2026.
\copyright~2026 IEEE. Personal use of this material is permitted. Permission
from IEEE must be obtained for all other uses, in any current or future media,
including reprinting/republishing this material for advertising or promotional
purposes, creating new collective works, for resale or redistribution to servers
or lists, or reuse of any copyrighted component of this work in other works.}
\fi
}

\begin{document}
\maketitle

\begin{abstract}
Target Wake Time (TWT), introduced in IEEE 802.11ax, lets an access point
schedule exactly when each station wakes, transmits, and dozes. Existing TWT
schedulers optimize energy or throughput, treating information freshness at
best as a constraint and offering no performance guarantees. We design the
wake schedule itself for freshness: minimize the weighted average Age of
Information (AoI) over stations subject to per-station energy budgets, where
the decision variables are the TWT triples (wake interval, offset, service
period duration). We derive a renewal-exact AoI model for TWT under per-SP
block fading and validate it against packet-level 802.11ax simulation with
${\sim}1\%$ mean error. We show that, unlike preemptive scheduling, non-preemptive TWT
packing can be infeasible at schedule density~1, and identify the granularity
condition under which a small-first best-fit packer provably succeeds. Around
this we build \textsc{Harmonic-Greedy}, a scheduler combining a convex
relaxation, anchor-optimized power-of-two rounding, and a best-of-uniform
safeguard, and prove it is a constant-factor approximation: $4/\ln 2 \approx
5.77$ under a mild granularity assumption and $6/\ln 2 \approx 8.66$
unconditionally. We implement the complete system in ns-3 -- a TWT wake/doze
mechanism integrated with the power-save architecture, plus the scheduler --
and show that it is the only scheduler that stays near a relaxation lower
bound across all regimes: against a strong energy-greedy baseline it ties
when per-station energy floors already pin the periods, and wins by
$4$--$36\%$ exactly where the schedule density is binding and must be
redistributed by AoI weight or channel quality rather than by energy budget
-- the regime our analysis identifies.

\end{abstract}

\begin{IEEEkeywords}
Age of Information; Target Wake Time; IEEE 802.11ax; scheduling;
energy efficiency; ns-3
\end{IEEEkeywords}

\section{Introduction}
\label{sec:intro}

Real-time monitoring applications -- industrial sensing, health wearables,
smart-building telemetry -- send small periodic status updates over Wi-Fi
while running on batteries. Two trends collide in such systems. On one hand,
the relevant performance metric is not throughput or even delay, but the
\emph{Age of Information} (AoI)~\cite{kaul12,yates21}: the time elapsed since
the generation of the freshest delivered update. On the other, the dominant energy cost is simply
being awake: a station that wakes rarely is efficient but stale; one that
wakes often is fresh but dies quickly.

IEEE 802.11ax~\cite{ieee80211ax} introduced Target Wake Time (TWT) precisely
to manage this trade-off mechanically: the AP and a station agree on a
periodic schedule of \emph{service periods} (SPs) -- a wake interval $T$, an offset $\phi$, and an
SP duration $d$ -- and the station dozes outside its SPs. TWT thus exposes,
for the first time in mainstream Wi-Fi, the exact control surface that AoI
theory wants to optimize: \emph{when} each source is allowed to deliver.
Yet existing TWT schedulers optimize energy or throughput, treat freshness at
best as a constraint, and offer no guarantees
(Section~\ref{sec:related}).

This paper designs the TWT schedule itself for freshness, with guarantees.
Our contributions:
\begin{itemize}
\item \textbf{Model} (Sections~\ref{sec:model}--\ref{sec:single}): a
renewal-exact expression for time-average and peak AoI of a TWT station under
per-SP block fading, including the per-wake overhead $c_w$. A structural
lemma shows that batching retransmission attempts within an SP is
\emph{never} beneficial without wake overhead; the per-wake cost $c_w$ is
thus exactly what separates TWT schedule design from abstract sampling
models.
\item \textbf{Hardness structure} (Section~\ref{sec:problem}): a three-station
example with schedule density (the fraction of air time the SPs reserve,
$\sum_i d_i/T_i$) exactly~1 that is infeasible under \emph{every}
placement -- non-preemptive TWT packing fundamentally differs from preemptive
utilization bounds (harmonic rate-monotonic is optimal at density~1 only
because preemption can split service across gap copies).
\item \textbf{Algorithm and guarantees}
(Sections~\ref{sec:algorithm}--\ref{sec:theory}): \textsc{Harmonic-Greedy},
combining a water-filling convex relaxation, power-of-two period rounding
with an anchor chosen by a candidate-argmin rule (evaluate a small, provably
sufficient set of grid anchors and keep the best), small-first best-fit
packing of power-of-two-sized reservations, and a best-of-uniform
safeguard. We prove a
deterministic $4/\ln 2 \approx 5.77$ approximation under a mild granularity
assumption, and $6/\ln 2 \approx 8.66$ unconditionally. The packing analysis
gives a tight granularity threshold $d \le t_0 (1-U)$.
\item \textbf{System} (Section~\ref{sec:implementation}): a complete ns-3
implementation -- a TWT wake/doze mechanism integrated with the 802.11
power-save architecture, with SP-boundary queue gating and TSF-anchored
offsets -- validated against the analytical model with ${\sim}1\%$ mean
error, plus five
protocol findings about deploying TWT (e.g., doze announcements cannot be
acknowledged over a faded channel; pending retransmissions must be gated or
they erase the energy savings).
\item \textbf{Evaluation} (Section~\ref{sec:evaluation}): against both a
naive (equal-interval) and a strong (energy-greedy) baseline, plus a
relaxation lower bound. We separate two effects -- period diversity, which
the energy-greedy baseline already captures, and AoI-weighting, which we
isolate in uniform-budget binding-density regimes where only our scheduler
wins ($4$--$6.5\%$ over both baselines, up to $36\%$ at scale with mixed
modulation rates), exactly where the theory predicts.
\end{itemize}

\section{Related Work}
\label{sec:related}

\textbf{Age of Information.} Since its introduction as a freshness
metric~\cite{kaul12}, AoI has grown into a broad field spanning queueing,
sampling, and estimation~\cite{kosta17,yates21,xoi26}. Foundational results
characterize the freshness-optimal update policy for a single
source~\cite{sun17}, the effect of packet management and queue
discipline~\cite{costa16,bedewy19q,yates19}, and AoI's role in
IoT~\cite{abdelmagid19}. AoI now also serves as a system-level metric in
edge-AI applications -- federated unlearning in mobile edge
computing~\cite{fedfresh26}, digital-twin healthcare~\cite{dtaoi25} --
which presuppose, rather than design, the delivery schedule beneath
them. Variants such as the Age of Incorrect
Information~\cite{maatouk20} refine the metric for estimation. Our renewal
model (Sec.~\ref{sec:single}) specializes this machinery to the TWT
service-period structure.

\textbf{AoI scheduling.} Minimizing weighted AoI across many sources over a
shared channel is the closest body of work. Centralized per-slot policies and
their optimality were established for broadcast networks with reliable and
unreliable links~\cite{kadota18,talak20} and with stochastic
arrivals~\cite{hsu20,kadota21}; the restless-bandit structure yields
Whittle-index policies that are near-optimal and, in cases, asymptotically
optimal~\cite{tripathi24}. These works schedule \emph{which source transmits
in each slot} under a centralized controller; we instead design \emph{periodic
wake agreements} fixed in advance, where a sleeping station is unreachable
between its service periods -- a constraint absent from slot-by-slot models.
A parallel line constrains energy explicitly -- energy-harvesting
sources~\cite{yates15,abdelmagid20}, retransmission/ARQ budgets~\cite{ceran19},
and distributed freshness optimization~\cite{talak18dist} -- but optimizes
transmission timing rather than a negotiated periodic wake schedule with a
hard duty-cycle budget. Closest in spirit are sleep-wake AoI optimization for
CSMA sensors~\cite{bedewy20}, which tunes continuous wake \emph{rates} under
battery constraints but has no negotiated periodic structure, and age-agnostic
cyclic schedulers~\cite{gamgam23}, which optimize deterministic transmission
patterns but model neither energy nor sleeping. A pull-based line measures
freshness only when the monitor asks: the Query Age of
Information~\cite{qaoi22} and Whittle-index policies that optimize age at
query instants~\cite{queryaoi24}. TWT is structurally dual: the delivery
opportunities, not the queries, occur at negotiated instants, and our
objective remains the continuous-time average.

\textbf{TWT and Wi-Fi.} TWT was introduced in IEEE 802.11ax to manage
energy and contention~\cite{ieee80211ax,bellalta16,khorov19}, and is extended
in Wi-Fi 7 (802.11be) -- notably restricted TWT for latency-sensitive
traffic and multi-link operation~\cite{lopezperez19,garciarodriguez21} --
making schedule design increasingly central~\cite{twtsurvey25}.
TASPER~\cite{tasper25} schedules TWT SPs to maximize throughput and energy
efficiency subject to AoI \emph{constraints}, proves NP-hardness, and gives a
heuristic without guarantees; timely-throughput TWT grouping maximizes
deadline-met throughput via drift-plus-penalty~\cite{twttimely23};
experimental IIoT schedulers~\cite{twtdet25}
respect AoI deadlines deterministically; and TWT mechanism implementations
and measurement studies~\cite{wns3twt24a,wns3twt24b,nurchis19} provide the
platform context. Single-device closed-form AoI analyses exist for the
cellular sleep-mode analogue of TWT (DRX)~\cite{drxaoi23} and, most
recently, for the TWT mechanism itself~\cite{ting26} -- performance
analysis of given configurations, not schedule design across a station
\emph{population}, which is what we do here. To our knowledge no prior
work makes AoI the \emph{objective} of TWT schedule design or provides
approximation guarantees.

\textbf{Periodic real-time scheduling.} Our packing analysis connects to
classical real-time theory. Rate-monotonic scheduling is optimal for
preemptive periodic tasks~\cite{liu73}, and harmonic task sets are schedulable
up to utilization~1 -- but only because preemption can split a job across the
gaps left by higher-rate tasks. TWT service periods are \emph{contiguous and
non-preemptive}, which is exactly why our Proposition~\ref{prop:ce} shows
density~1 can be infeasible. The relevant non-preemptive analogue is the
pinwheel problem and distance-constrained scheduling~\cite{holte89,han92,chan92},
whose density thresholds and harmonic constructions inform our
small-first best-fit packer; we adapt them to the AoI-weighted, energy-budgeted
TWT setting.

\section{System Model}
\label{sec:model}

$N$ stations are associated with one AP. Station $i$ holds a TWT agreement
$(T_i, \phi_i, d_i)$, where $T_i$ is its \emph{wake interval} (the time
between consecutive wake-ups), $\phi_i$ its \emph{offset} (the start of its
first SP on the shared clock), and $d_i$ its \emph{SP duration} (the
contiguous awake time per wake): the station wakes at times $\phi_i + k T_i$,
$k = 0, 1, \dots$, stays awake for $d_i$, and dozes otherwise
(Fig.~\ref{fig:mechanism}; Table~\ref{tab:notation} collects notation). SPs
of different stations must not overlap: keeping them disjoint is the
scheduler's responsibility, and it is precisely what designs contention
away. Station $i$ samples its source at the start of each SP
(generate-at-will) and transmits the update uplink within the SP.

\begin{table}[t]
\centering
\caption{Notation.}
\label{tab:notation}
\footnotesize
\begin{tabular}{ll}
\toprule
$(T_i, \phi_i, d_i)$ & TWT triple: wake interval, offset, SP duration \\
$\rho_i$;\ \ $\tmin_i$ & duty-cycle budget; energy floor $d_i/\rho_i$ \\
$p_i$ & per-SP success probability (block fading) \\
$q_i$, $k_i$, $s_i$ & per-attempt success, attempts per SP, attempt airtime \\
$c_{w,i}$ & per-wake overhead (ramp-up, synchronization) \\
$\delta_i$ & mean in-SP delivery delay \\
$w_i$;\ \ $a_i$ & AoI weight; cost coefficient $w_i(1/p_i - 1/2)$ \\
$\bar A_i$, $\bar A_i^{\mathrm{peak}}$ & time-average and peak AoI \\
$U$ & schedule density $\sum_i d_i / T_i$ \\
$\dmax$ & largest SP duration $\max_k d_k$ \\
$t_0$, level $j$ & harmonic grid anchor; grid period $t_0 2^j$ \\
$\rho^\star$;\ \ $\lambda$ & scheduler density target; water-filling multiplier \\
\bottomrule
\end{tabular}
\end{table}

\begin{figure}[t]
\centering
% F1: TWT mechanism timeline (two stations, SPs on a shared grid)
\begin{tikzpicture}[xscale=0.74, yscale=0.62, font=\scriptsize]
  % time axis
  \draw[->] (0,0) -- (10.6,0) node[below left] {time};

  % station 2 (period 2T): row at y=1
  \node[left] at (0,1.1) {STA 2};
  \draw[gray!50] (0,0.75) -- (10.2,0.75);
  \foreach \x in {1.5, 6.3} {
    \fill[blue!25] (\x,0.75) rectangle (\x+0.9,1.45);
    \draw[blue!60!black] (\x,0.75) rectangle (\x+0.9,1.45);
  }
  \node[blue!60!black] at (1.95,1.13) {SP};
  \draw[<->] (2.5,1.1) -- (6.3,1.1) node[midway, above=-1pt] {$T_2 = 2T_1$};

  % station 1 (period T): row at y=2.6
  \node[left] at (0,2.7) {STA 1};
  \draw[gray!50] (0,2.35) -- (10.2,2.35);
  \foreach \x in {0.3, 2.7, 5.1, 7.5, 9.9} {
    \fill[orange!30] (\x,2.35) rectangle (\x+0.6,3.05);
    \draw[orange!70!black] (\x,2.35) rectangle (\x+0.6,3.05);
  }
  \draw[<->] (0.3,3.35) -- (2.7,3.35) node[midway, above=-1pt] {$T_1$};
  \draw[decorate,decoration={brace,mirror,raise=2pt}] (0.3,2.3) -- (0.9,2.3)
    node[midway, below=4pt] {$d_1$};

  % doze annotation (between STA 1 SPs)
  \node[gray] at (8.7,1.9) {doze (radio off)};
  \draw[gray,->] (8.2,2.1) -- (6.9,2.6);

  % offset annotation
  \draw[dashed, gray] (0.3,-0.15) -- (0.3,2.35);
  \draw[dashed, gray] (1.5,-0.15) -- (1.5,0.75);
  \draw[<->] (0.3,-0.35) -- (1.5,-0.35) node[midway, below=-1pt] {$\phi_2 - \phi_1$};
\end{tikzpicture}
\caption{TWT wake schedules: station $i$ wakes for $d_i$ every $T_i$ at
offset $\phi_i$ on a common (TSF-anchored) grid; the scheduler chooses the
triples so that service periods are disjoint.}
\label{fig:mechanism}
\end{figure}

\textbf{Channel.} We model per-SP block fading: the channel coherence time
exceeds the SP duration, so all transmission attempts within one SP share
fate; an SP succeeds with probability $p_i \in (0,1]$, independently across
SPs. This is the regime in which the retransmission policy genuinely
matters for AoI. At the opposite extreme -- per-attempt i.i.d.\ loss --
a few in-SP retries drive the per-SP success probability to~1 and the
channel dimension of the problem disappears; we verify this empirically in
Section~\ref{sec:evaluation}.

\textbf{Energy.} The station's energy cost is its awake time: each wake costs
a fixed overhead $c_{w,i}$ (radio ramp-up, synchronization) plus the SP
airtime. Station $i$ has a duty-cycle budget $\rho_i$:
$d_i / T_i \le \rho_i$, equivalently $T_i \ge \tmin_i := d_i/\rho_i$.

\textbf{Objective.} Let $A_i(t)$ be the age of the freshest update from
station $i$ delivered to the AP. We minimize the weighted time-average AoI
$\sum_i w_i \bar A_i$; Section~\ref{sec:theory} notes that all results carry
to weighted peak AoI.

\textbf{Scope.} We deliberately study the cleanest setting that isolates
schedule design: uplink status updates, generate-at-will sampling, scheduled
non-overlapping SPs (so contention is designed away), and per-SP block
fading. These assumptions are standard in AoI scheduling~\cite{kadota18,
gamgam23} and matched to monitoring IoT, the dominant TWT use case. Three
violations deserve comment. \emph{Stochastic arrivals:} if updates arrive
randomly rather than on demand, the wake schedule itself stays feasible
unchanged, but the sampled update may already be stale at wake time; the
objective acquires an arrival-age component that couples the stations only
through the density constraint, turning the design into a constrained-MDP
problem (Section~\ref{sec:conclusion}). \emph{Contention:} traffic from
non-TWT stations can collide with SPs; such losses enter the model through
$p_i$ as long as they are independent across SPs, and where that cannot be
assumed, 802.11be restricted TWT provides protected SPs. \emph{Intra-SP
channel variation:} block fading is the boundary case in which
retransmission policy matters at all -- under per-attempt i.i.d.\ loss,
in-SP retries drive the SP success probability to 1 and the problem
degenerates gracefully (Section~\ref{sec:single}; verified empirically in
Section~\ref{sec:evaluation}). Downlink AoI (broadcast TWT, AP-side
delivery) is likewise deferred to Section~\ref{sec:conclusion}.

\section{Single-Station Structure}
\label{sec:single}

Successful SPs form a renewal process; the standard renewal-reward AoI
argument~\cite{kaul12,yates21} then gives the following, which is the linchpin
that turns TWT scheduling into a linear objective (Section~\ref{sec:problem}).

\begin{proposition}[TWT AoI]
\label{prop:renewal}
With wake interval $T$, per-SP success probability $p$, and mean in-SP
delivery delay $\delta$,
\[
\bar A = \delta + T\Big(\frac{1}{p} - \frac12\Big), \qquad
\bar A^{\mathrm{peak}} = \delta + \frac{T}{p}.
\]
\end{proposition}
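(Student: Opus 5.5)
We would prove the statement with a renewal-reward argument in which the renewal epochs are the delivery instants of successful SPs. Place the SPs of the station at times $kT$, $k\ge 0$. Under per-SP block fading each SP succeeds independently with probability $p$. Because of generate-at-will sampling, an update delivered in SP $k$ was generated at time $kT$ and arrives at $kT+D_k$, where $D_k$ is the in-SP delivery delay with $\mathbb{E}[D_k]=\delta$. We assume, as the model implies, that $D_k$ is independent of the number of failed SPs between successes. The gap between two consecutive successful SPs is $X\cdot T$ with $X\sim\mathrm{Geom}(p)$ on $\{1,2,\dots\}$. Hence $\mathbb{E}[X]=1/p$ and $\mathbb{E}[X^2]=(2-p)/p^2$. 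The successive pairs (gap, delay) are i.i.d., so the delivery instants form a renewal process and the renewal-reward theorem applies.

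\textbf{Time-average AoI.} Take a cycle from the $j$-th successful delivery at time $t_j + D_j$ to the next at $t_{j+1}+D_{j+1}$, where $t_{j+1}-t_j = XT$. During the cycle the age rises linearly from $D_j$ to $XT + D_{j+1}$, and the cycle length is $L = XT + D_{j+1} - D_j$. The area under the age curve is a trapezoid, which we write as the standard decomposition
\[
\tfrac12 (XT + D_{j+1})^2 - \tfrac12 D_{j+1}^2 \;+\; \text{(cross terms with } D_j\text{)}.
\]
Rather than fight the cross terms, we would use the equivalent sampling-epoch decomposition. Define cycles from one successful sampling instant $t_j$ to the next, of length $XT$. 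The age at time $t_j + u$, for $u<XT$, is $u + XT_{\text{prev}}$ before the new delivery and $u$ after it. Equivalently, it is the integral of the age process measured as (time since $t_j$) plus the extra term $D_{j+1}\cdot XT$ contributed by the delay shift. Standard bookkeeping as in \cite{kaul12,yates21} gives the area per cycle
\[
\mathbb{E}[Q] = \tfrac12 T^2\,\mathbb{E}[X^2] + \delta\,T\,\mathbb{E}[X].
\]
Dividing by $\mathbb{E}[L]=T/p$ then yields
\[
\bar A = \frac{T\,\mathbb{E}[X^2]}{2\,\mathbb{E}[X]} + \delta = \frac{T(2-p)}{2p} + \delta = \delta + T\Big(\frac1p - \frac12\Big).
\]

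\textbf{Peak AoI.} The peak just before a delivery equals the gap between consecutive successful sampling instants plus the delay of the new delivery, namely $XT + D_{j+1}$. Taking expectations gives $\bar A^{\mathrm{peak}} = T/p + \delta$.

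\textbf{Main obstacle.} The delicate step is the area computation when the delays $D_j$ vary. The cross terms must cancel in expectation, and this cancellation uses only stationarity and independence of $D_j$ from the geometric gaps. We would verify it first by writing the trapezoid area explicitly, $Q = \tfrac12\big[(XT+D_{j+1})^2 - D_j^2\big]$ over a cycle of length $L$. Stationarity gives $\mathbb{E}[D_{j+1}^2]=\mathbb{E}[D_j^2]$, so the squared-delay terms cancel. What remains is $\tfrac12 T^2\mathbb{E}[X^2] + T\,\mathbb{E}[X]\,\delta$, and since $\mathbb{E}[L] = T\,\mathbb{E}[X]$, this confirms both formulas. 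We also need $D_j < T$, which holds because the SP length satisfies $d\le T$; this keeps deliveries ordered, so every successful update is fresher than its predecessor.
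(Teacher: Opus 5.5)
Your proposal is correct and follows essentially the same renewal-reward argument as the paper: a geometric number of wake intervals between successful SPs, an expected per-cycle area of $\tfrac12 T^2\mathbb{E}[X^2] + \delta T\,\mathbb{E}[X]$ divided by $\mathbb{E}[L] = T/p$, and the peak $XT + D_{j+1}$ averaging to $T/p + \delta$. Your explicit trapezoid $\tfrac12\big[(XT+D_{j+1})^2 - D_j^2\big]$ is a more careful version of the paper's per-cycle area $\delta X + X^2/2$, but one claim should be fixed: with random delays the delivery instants are \emph{not} a renewal process, because consecutive inter-delivery times $X_jT + D_{j+1} - D_j$ and $X_{j+1}T + D_{j+2} - D_{j+1}$ share $D_{j+1}$ and so are not independent; the argument still goes through because the $D_j^2$ terms and the cycle lengths telescope pathwise, so the strong law applied to the i.i.d.\ pairs $(X_j, D_{j+1})$ (or, equivalently, to the successful sampling instants, which do form a renewal process) gives the same limit.
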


\begin{proof}[Sketch]
Let $G$ be the number of wake intervals between successive successful SPs.
Since each SP succeeds independently with probability $p$, $G$ is geometric on
$\{1,2,\dots\}$, so $\mathbb{E}[G] = 1/p$ and $\mathbb{E}[G^2] = (2-p)/p^2$.
The inter-delivery time is $X = GT$. After a successful delivery the AoI drops
to the in-SP delay (mean $\delta$) and then grows linearly until the next
success, so each renewal cycle contributes area $\delta X + X^2/2$. Dividing
expected area by expected cycle length gives
$\bar A = \delta + \mathbb{E}[X^2]/(2\mathbb{E}[X]) = \delta + T(1/p - 1/2)$,
and the per-cycle peak $\delta + X$ averages to
$\bar A^{\mathrm{peak}} = \delta + \mathbb{E}[X] = \delta + T/p$.
\end{proof}

When the channel is per-attempt i.i.d. with success probability $q$ and the
SP holds $k$ attempts, $p = 1-(1-q)^k$ and $d = k s + c_w$ with $s$ the
per-attempt airtime. Choosing $k$ under the energy-tight period
$T = d/\rho$ reveals a clean dichotomy:

\begin{lemma}[batching needs overhead]
\label{lem:batching}
With $c_w = 0$, the single attempt $k^*=1$ is optimal for every $q$: writing
$f(k) = (k + c_w/s)(1/p(k) - 1/2)$, one computes
$f(2)-f(1) = q/(2(2-q)) > 0$. With $c_w > 0$, $k^* > 1$ for sufficiently
large $c_w/s$ at any fixed $q$.
\end{lemma}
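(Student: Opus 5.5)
Starting from Proposition~\ref{prop:renewal} with the energy-tight period $T = d/\rho = (ks + c_w)/\rho$, the time-average AoI is $\delta + (s/\rho)\, f(k)$, where $f(k) = (k + c_w/s)(1/p(k) - 1/2)$ and $p(k) = 1-(1-q)^k$. I will treat $\delta$ as a lower-order term, either fixed or bounded by the SP length, and compare attempt counts only through $f$. The positive factor $s/\rho$ does not affect the argmin, so the lemma reduces to two statements about $f$.

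\textbf{Case $c_w = 0$.} Here $f(k) = k\,(1/p(k) - 1/2)$.
\begin{itemize}
\item $f(1) = 1/q - 1/2$.
\item With $p(2) = q(2-q)$, we get $f(2) = 2/(q(2-q)) - 1$.
\item Putting both over the common denominator $2q(2-q)$, the numerator is $4 - 2q(2-q) - (2-q)^2 + q(2-q) = q^2$. Hence $f(2) - f(1) = q/(2(2-q)) > 0$, matching the statement.
\end{itemize}

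To get optimality of $k^* = 1$ against every $k$, not just $k = 2$, I would show $f(k) \ge f(1)$ for all $k$. The idea is to use $p(k) \le kq$, a union bound: then $k/p(k) \ge 1/q$, so $f(k) = k/p(k) - k/2$. The term $-k/2$ grows, though, so this bound alone is not enough.

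A cleaner route is to write $u = 1-q$ and note that $k/p(k) = k/(1-u^k)$. Using $1 - u^k = (1-u)(1 + u + \dots + u^{k-1})$,
\[
\frac{k}{1-u^k} - \frac{k}{2} = \frac{1}{q}\cdot\frac{k}{\sum_{j<k} u^j} - \frac{k}{2}.
\]
I then need to show this is at least $1/q - 1/2$, which rearranges to
\[
k - \sum_{j<k} u^j \;\ge\; \frac{q(k-1)}{2}\sum_{j<k} u^j .
\]
The left side equals $\sum_{j<k} (1 - u^j)$. For each $j$, $1 - u^j = q\sum_{i<j} u^i \ge q\,j\,u^{j}$ or a similar bound. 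The cleanest way to finish is probably induction on $k$ or a convexity argument. I expect this inequality to be the main obstacle. If it resists, I would fall back to the statement's literal claim, which is the explicit comparison $f(2) - f(1)$ together with monotonicity of $f$ in $k$.

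\textbf{Case $c_w > 0$.} Fix $q$ and let $c = c_w/s$. Then
\[
f(2) - f(1) = \frac{q}{2(2-q)} + c\left(\frac{1}{q(2-q)} - \frac{1}{q}\right) = \frac{q}{2(2-q)} - c\,\frac{1-q}{q(2-q)}.
\]
For $q < 1$ this is negative once $c > q^2/(2(1-q))$, so $k = 1$ is strictly beaten and $k^* > 1$. For $q = 1$ every $k$ gives $p = 1$, so $f$ increases in $k$ and $k^* = 1$. I would therefore state the $c_w > 0$ claim for $q \in (0,1)$, which is the nontrivial fading regime.
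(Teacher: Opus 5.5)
Your $k=2$ computation and your $c_w>0$ argument are correct and match the paper's argument. The paper's only justification is the identity $f(2)-f(1)=q/(2(2-q))$ stated inside the lemma, with the $c_w>0$ half left implicit. Your explicit threshold $c_w/s>q^2/(2(1-q))$ sharpens the statement. So does your observation that the $c_w>0$ claim fails at $q=1$, where $f(k)=(k+c_w/s)/2$. One slip: $f(1)=(2-q)^2/(2q(2-q))$, so the numerator is $4-2q(2-q)-(2-q)^2=q^2$. Your extra $+q(2-q)$ term should not be there; as written your expression evaluates to $2q$.

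\textbf{The gap.} For $c_w=0$ you have not shown that $k^*=1$ beats \emph{every} $k$. The inequality $f(2)>f(1)$ is all the paper's one-line argument establishes, and it does not exclude $f(k)<f(1)$ for some $k\ge 3$. Your fallback, ``monotonicity of $f$ in $k$'', is the claim to be proved, so it is circular. The bound you suggest cannot close your reduced inequality either. Substituting $1-u^j\ge qju^j$ leaves $\sum_{j<k} ju^j\ge\frac{k-1}{2}\sum_{j<k}u^j$. This is false for every $u\in[0,1)$ and $k\ge 2$ by Chebyshev's sum inequality, since $j$ is increasing and $u^j$ is decreasing.

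\textbf{The fix.} Put $S=\sum_{j<k}u^j$ and $g(j)=1-u^j$, so that $qS=1-u^k=g(k)$. Your inequality then reads
\[
\sum_{j=0}^{k-1} g(j)\;\ge\;\frac{k-1}{2}\,g(k).
\]
Since $g$ is concave with $g(0)=0$, the chord bound gives $g(j)\ge (j/k)\,g(k)$ for $0\le j\le k$. Summing over $j$ gives exactly the display, strictly for $k\ge 2$ when $0<q<1$; for $q=1$, $f(k)=k/2$ directly.

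There is also a one-line route that proves full monotonicity. Write $u=e^{-x}$ and $y=kx/2$; then $f(k)=\frac{k}{2}\cdot\frac{1+u^k}{1-u^k}=\frac{1}{x}\,y\coth y$. The function $y\coth y$ is strictly increasing on $(0,\infty)$, because its derivative is $(\sinh 2y-2y)/(2\sinh^2 y)>0$. This is what ``$k^*=1$ is optimal for every $q$'' actually requires.
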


The per-wake overhead is thus load-bearing: it is exactly what distinguishes
TWT schedule design from abstract sampling models, and we keep it in the
model throughout.

We validated Proposition~\ref{prop:renewal} against packet-level 802.11ax
simulation (Section~\ref{sec:evaluation}): measured time-average AoI
matches within $0.02\%$ ($p \approx 1$, two parameter sets) and within
$0.6$--$1.2\%$ (seed-averaged) under per-SP block fading with
$p \in \{0.8, 0.6, 0.4\}$,
with measured duty cycles equal to $d/T$ to four decimals.

\section{The Multi-Station Problem}
\label{sec:problem}

By Proposition~\ref{prop:renewal} the schedule cost is
$J(T) = \sum_i a_i T_i + C$ with $a_i := w_i(1/p_i - 1/2)$ and $C$ a
$T$-independent constant; the design problem is
\begin{align}
\min_{T, \phi}\ & \textstyle\sum_i a_i T_i \nonumber\\
\text{s.t.}\ & \text{SPs pairwise disjoint},\quad T_i \ge \tmin_i.
\label{eq:problem}
\end{align}
Disjointness implies the \emph{density} bound $U = \sum_i d_i/T_i \le 1$,
but the converse fails in a structural way:

\begin{proposition}[density 1 can be unpackable]
\label{prop:ce}
Let $t_0$ be a base period and consider stations $A_1$: $(T{=}t_0,
d{=}t_0/2)$, $A_2$: $(T{=}t_0, d{=}t_0/4)$, $B$: $(T{=}2t_0, d{=}t_0/2)$,
with $U = 1$. No placement of offsets makes the SPs disjoint.
\end{proposition}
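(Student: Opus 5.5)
Work modulo the common hyperperiod $2t_0$, identified with a circle of circumference $2t_0$; it is then enough to show that no offset choice yields a disjoint arrangement on this circle. Over one hyperperiod, $A_1$ contributes two arcs of length $t_0/2$ placed at $\phi_1$ and $\phi_1+t_0$, $A_2$ contributes two arcs of length $t_0/4$ at $\phi_2$ and $\phi_2+t_0$, and $B$ contributes a single arc of length $t_0/2$. These lengths sum to $t_0+t_0/2+t_0/2=2t_0$, so (as $U=1$ says) a disjoint placement must tile the circle exactly, with no idle time at all.

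It is cleaner to first reduce modulo $t_0$, i.e.\ to a circle of circumference $t_0$, where the period-$t_0$ stations become single arcs. On that circle $A_1$ is an arc of length $t_0/2$ and $A_2$ is an arc of length $t_0/4$. If they are disjoint, their complement has total length exactly $t_0/4$, split into at most two gaps $g,g'\ge 0$ with $g+g'=t_0/4$; these gaps lie between $A_1$ and $A_2$ on either side of $A_2$.

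Now lift back to the circle of length $2t_0$. The free set there is the preimage of the gaps, namely the four gaps $g,g',g,g'$ in cyclic order, with each gap separated from the next by an $A_1$ or $A_2$ arc of positive length. $B$'s SP is a single contiguous arc of length $t_0/2$, so it must lie inside one maximal free interval. Every free interval, however, has length at most $\max(g,g')\le t_0/4<t_0/2$. So $B$ cannot be placed, whatever $\phi_B$ is. This argument needs only the fact that total free length modulo $t_0$ is $t_0/4$, so it holds whether or not $g,g'$ vanish, and it covers every choice of $\phi_1,\phi_2$ by symmetry.

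The one point that needs care is justifying that free intervals cannot merge across the lift. Two gaps are always separated by at least one busy arc of length $\ge t_0/4>0$, so this follows directly. One should also remark that disjointness over all $k$ is equivalent to disjointness on the hyperperiod circle, because every pattern repeats with period $2t_0$. Finally, the contrast with preemptive harmonic rate-monotonic scheduling is worth stating: there, $B$ could be split into two pieces of $t_0/4$ filling the gaps $g$ and $g'$ when these are both $t_0/8$ or larger, and that splitting is exactly what non-preemption forbids.
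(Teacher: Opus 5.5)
Your argument is correct and is essentially the paper's: both reduce modulo $t_0$, note that $A_1$ and $A_2$ leave a per-window free set of measure exactly $t_0/4$, and conclude that $B$'s contiguous SP of length $t_0/2$ cannot fit. The paper does this by projecting $B$'s SP injectively into that set, you by bounding every maximal free run by $\max(g,g')\le t_0/4$. Only your closing aside on preemption is off: $B$'s $t_0/2$ of service must be spread over all four gap copies $g,g',g,g'$ of its $2t_0$ window (exactly $t_0/2$ in total), not placed as two $t_0/4$ pieces, which would not even fit when $g=g'=t_0/8$.
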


\begin{proof}
Both $A_1$ and $A_2$ have period $t_0$, so their SPs trace the same
pattern in every base window $[k t_0, (k{+}1) t_0)$, and the free set $S$
left in each window, of measure $t_0/4$, is likewise identical from
window to window. Now place $B$: its SP is a
contiguous interval of length $t_0/2 \le t_0$, so it meets at most two
consecutive windows, splitting into (at worst) a suffix of one window and
a prefix of the next. Each piece must avoid $A_1$ and $A_2$, i.e.\ lie
inside that window's copy of $S$; and since $S$ repeats identically, the
two pieces occupy \emph{disjoint} parts of the same per-window set. Their
lengths sum to $t_0/2$, forcing $|S| \ge t_0/2 > t_0/4$ -- a contradiction
(Fig.~\ref{fig:ce1}).
\end{proof}

\begin{figure}[t]
\centering
% F3: CE1 -- density 1 that no placement can pack
\begin{tikzpicture}[xscale=1.18, yscale=0.6, font=\scriptsize]
  % two base windows [0,t0), [t0,2t0)
  \draw[->] (0,0) -- (6.6,0);
  \foreach \x/\l in {0/0, 3/{t_0}, 6/{2t_0}} {
    \draw (\x,-0.12) -- (\x,0.12);
    \node[below] at (\x,-0.12) {$\l$};
  }
  % A1 (d = t0/2) and A2 (d = t0/4) repeat in EVERY window
  \foreach \o in {0, 3} {
    \fill[orange!30] (\o,0) rectangle (\o+1.5,0.7);
    \draw[orange!70!black] (\o,0) rectangle (\o+1.5,0.7);
    \fill[red!25] (\o+1.5,0) rectangle (\o+2.25,0.7);
    \draw[red!60!black] (\o+1.5,0) rectangle (\o+2.25,0.7);
  }
  \node at (0.75,0.35) {$A_1$};
  \node at (1.875,0.35) {$A_2$};
  \node at (3.75,0.35) {$A_1$};
  \node at (4.875,0.35) {$A_2$};
  % free gaps
  \foreach \o in {0, 3} {
    \node[gray] at (\o+2.625,0.35) {\emph{free}};
  }
  % B's needed contiguous t0/2 interval -- shown failing
  \fill[blue!20] (2.6,1.1) rectangle (4.1,1.8);
  \draw[blue!60!black, thick] (2.6,1.1) rectangle (4.1,1.8);
  \node[blue!60!black] at (3.35,1.45) {$B$ ($d_B = t_0/2$)};
  \draw[->, thick, red!70!black] (3.6,1.05) -- (3.9,0.78);
  \node[red!70!black, align=center] at (3.3,2.35)
    {must avoid $A_1, A_2$ in \emph{both} windows:\\
     needs free measure $t_0/2 > t_0/4$};
\end{tikzpicture}
\caption{Proposition~\ref{prop:ce}: at density $1$, the per-window free set
has measure $t_0/4$, but $B$'s contiguous SP needs $t_0/2$ of it -- in every
window, under every placement.}
\label{fig:ce1}
\end{figure}

\begin{remark}
Under \emph{preemptive} scheduling the same instance is feasible (harmonic
rate-monotonic is optimal at utilization~1): preemption splits $B$'s service
across the two gap copies. TWT SPs are contiguous by design, so utilization
intuition from real-time scheduling does not transfer. Any packing guarantee
must involve granularity, which our analysis makes precise
(Lemma~\ref{lem:packing}).
\end{remark}

A second structural fact bounds every period from below:

\begin{lemma}[long SPs dominate periods]
\label{lem:longsp}
In any feasible schedule, $T_i \ge \dmax := \max_k d_k$ for all $i$.
\end{lemma}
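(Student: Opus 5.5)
The argument is a direct interval-disjointness argument on the time axis; no earlier result beyond the model of Section~\ref{sec:model} is needed. Fix a feasible schedule, a station $i$, and let $k$ be a station attaining $\dmax = d_k$. The case $i = k$ is easy. Station $k$'s consecutive SPs $[\phi_k + mT_k,\ \phi_k + mT_k + d_k)$ cannot overlap, since a station is awake for $d_k$ contiguously and then must doze before its next wake. Equivalently, $d_k \le T_k$ follows from the duty-cycle constraint $d_k/T_k \le \rho_k \le 1$. So $T_k \ge d_k = \dmax$.

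For $i \ne k$, I would argue by contradiction. Suppose $T_i < \dmax$. Consider any single SP of $k$, which is an interval $I = [t,\, t + d_k)$ of length $d_k > T_i$. Station $i$ wakes at every point of the arithmetic progression $\phi_i + \mathbb{Z}_{\ge 0} T_i$, and any half-open interval of length exceeding $T_i$ contains at least one point of this progression. The one caveat is that $I$ must lie beyond $\phi_i$. This holds for all sufficiently late SPs of $k$, which exist because schedules run forever, so I will pick $m$ large enough that $t = \phi_k + mT_k \ge \phi_i$. Then some wake instant $\phi_i + nT_i$ lies in $I$. Station $i$'s SP starting there has positive length $d_i > 0$, so it intersects $I$ in a set of positive measure. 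This contradicts pairwise disjointness of SPs. Hence $T_i \ge \dmax$.

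The only step needing care is making the counting claim precise, together with the choice of $m$. Given $t \ge \phi_i$, let $n = \lceil (t - \phi_i)/T_i \rceil$. Then $\phi_i + nT_i \in [t,\, t + T_i) \subset [t,\, t + d_k)$. Disjointness is understood with half-open intervals, so touching endpoints are allowed; I will keep the inequality $d_k > T_i$ strict so that the overlap is genuine. I would also note the stronger form that falls out of the same argument: every $T_i \ge \dmax$ even when $i$ and $k$ share the same period. This is what later lets the rounding step assume that grid periods lie above $\dmax$.
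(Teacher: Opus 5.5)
Your proof is correct and is essentially the paper's argument: if $T_i < d_k$ for a station $k$ attaining $\dmax$, some wake start of station $i$ lands inside an SP of $k$, and the SP beginning there overlaps it. Your separate treatment of the case $i = k$ (via $d_k/T_k \le \rho_k \le 1$) and your choice of an SP of $k$ late enough to lie after $\phi_i$ make precise two points that the paper's two-line proof leaves implicit.
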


\begin{proof}
Suppose some station $A$ had $T_A < d_B$ for a station $B$. The wake
starts of $A$ are spaced $T_A$ apart, so at least one falls strictly
inside $B$'s SP of length $d_B$, and the $A$-SP beginning there overlaps
it. Hence every period is at least the longest SP duration.
\end{proof}

The following small instance is carried through
Sections~\ref{sec:algorithm}--\ref{sec:theory} to make each stage of the
design concrete.

\begin{example}[running]
\label{ex:run}
Three stations with equal SPs $d_i = 4$\,ms, uniform duty budgets
$\rho_i = 0.5$ (energy floors $\tmin_i = 8$\,ms), reliable channels
($p_i = 1$, so $a_i = w_i/2$), and weights $w = (8, 2, 1)$, i.e.
$a = (4, 1, \tfrac12)$. At the floors the density would be
$3 \cdot 4/8 = 1.5 > 1$: the air time the stations \emph{want} exceeds what
exists, so the density constraint binds and the periods must be
differentiated -- by weight, not by budget, since the budgets are
identical: exactly the regime our evaluation isolates
(Section~\ref{sec:evaluation}).
\end{example}

\section{The \textsc{Harmonic-Greedy} Scheduler}
\label{sec:algorithm}

\begin{algorithm}[t]
\caption{\textsc{Harmonic-Greedy}}
\label{alg:hg}
\begin{algorithmic}[1]
\Require per station $i$: weight $w_i$, duty budget $\rho_i$, per-attempt
success $q_i$, attempt airtime $s_i$, wake overhead $c_{w,i}$; density
target $\rho^\star$
\Ensure disjoint TWT triples $(T_i, \phi_i, d_i)$ with $d_i/T_i \le \rho_i$
\State \emph{size SPs:} $k_i \gets$ energy-tight optimal attempts
(Lemma~\ref{lem:batching}); $d_i \gets k_i s_i + c_{w,i}$;
$p_i \gets 1 - (1-q_i)^{k_i}$; $a_i \gets w_i(1/p_i - \tfrac12)$;
$\tmin_i \gets d_i/\rho_i$
\State \emph{relax:} $T_i \gets \max(\tmin_i, \dmax, \sqrt{\lambda d_i/a_i})$,
bisecting $\lambda \ge 0$ until $\sum_i d_i/T_i \le \rho^\star$
(Lemma~\ref{lem:relax})
\State \emph{anchor and round:} fold each $T_i$ by powers of two into
$(T_{\min}/2,\, T_{\min}]$, $T_{\min} = \min_i T_i$; among these candidate
anchors pick the $t_0$ minimizing $\sum_i a_i \lceil T_i \rceil_{t_0}$, where
$\lceil \cdot \rceil_{t_0}$ rounds up on the grid $\{t_0 2^j\}$
(Corollary~\ref{cor:anchor}); set $T_i \gets \lceil T_i \rceil_{t_0}$
\State \emph{re-expand:} for stations with $p_i < 1$, grow $k_i$ (hence
$d_i$) into the rounding slack, capped by $\rho_i$ -- extra attempts at
fixed $T_i$ are free for AoI
\State \emph{pack:} in increasing-$T_i$ order, place each $\phi_i$ by
small-first best-fit (certified variant: dyadic reservations; practical
variant: first-fit of the true $d_i$; Section~\ref{sec:theory})
\State \emph{safeguard:} $T_u \gets \max(\max_i \tmin_i, \sum_i d_i)$; if
$\sum_i a_i T_u$ beats the rounded cost, or packing failed, return the
uniform schedule (offsets staggered back-to-back) instead
\end{algorithmic}
\end{algorithm}

Algorithm~\ref{alg:hg} lists the scheduler.
Line~1 sizes each station's SP by the single-station dichotomy of
Section~\ref{sec:single} and precomputes the cost coefficients $a_i$.
Line~2 solves the convex relaxation below; its water-filling solution
shortens the periods of high-weight, poor-channel stations first. Line~3
snaps the ideal periods onto a power-of-two grid -- this is what makes the
later packing tractable -- with the grid anchor chosen by the
candidate-argmin rule analyzed in Corollary~\ref{cor:anchor}. Line~4
reinvests the slack that rounding up creates into extra transmission
attempts. Line~5 places offsets, and line~6 guards against the
octave-quantization pathology discussed at the end of this section. The
relaxation of line~2 replaces disjointness by its density consequence:
\begin{equation}
J^* = \min\Big\{ \sum_i a_i T_i : \sum_i \frac{d_i}{T_i} \le 1,\;
T_i \ge \max(\tmin_i, \dmax) \Big\},
\label{eq:relax}
\end{equation}
Every feasible schedule satisfies the constraints of \eqref{eq:relax} --
disjoint SPs occupy at most the full timeline, and
Lemma~\ref{lem:longsp} supplies the $\dmax$ floor -- so $J^*$ lower-bounds
\eqref{eq:problem}.

\begin{lemma}[water-filling]
\label{lem:relax}
\eqref{eq:relax} is convex; its solution is
$T_i(\lambda) = \max\big(\tmin_i, \dmax, \sqrt{\lambda d_i / a_i}\big)$ with
$\lambda \ge 0$ the density multiplier (found by bisection).
\end{lemma}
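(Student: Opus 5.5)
The plan is to treat \eqref{eq:relax} as a separable convex program and solve it through its KKT conditions. Write $L_i := \max(\tmin_i, \dmax)$ for the per-station lower bound.

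\emph{Convexity.} The objective $\sum_i a_i T_i$ is linear, and $a_i > 0$ because $p_i \in (0,1]$ gives $1/p_i - 1/2 \ge 1/2$. Each term $d_i/T_i$ is convex on $T_i > 0$, so the density constraint is a convex sublevel set. The box constraints $T_i \ge L_i$ are affine. Hence \eqref{eq:relax} is convex.

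\emph{Slater.} If the problem is feasible, it is strictly feasible: making every $T_i$ large pushes the density strictly below $1$ while respecting the floors. Strong duality therefore holds and the KKT conditions characterize the optima. Since the objective is strictly increasing in each $T_i$ and coercive on the feasible set, a minimizer exists.

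\emph{Solving KKT.} Introduce a multiplier $\lambda \ge 0$ for the density constraint and $\mu_i \ge 0$ for each floor. Stationarity gives
\[
a_i - \lambda d_i/T_i^2 - \mu_i = 0 .
\]
Because the problem is separable once $\lambda$ is fixed, I would argue station by station, minimizing $a_i T_i + \lambda d_i/T_i$ over $T_i \ge L_i$. This one-dimensional function is convex with unconstrained minimizer $\sqrt{\lambda d_i/a_i}$, so the constrained minimizer is its projection onto $[L_i,\infty)$:
\[
T_i(\lambda) = \max\big(L_i, \sqrt{\lambda d_i/a_i}\big),
\]
which is exactly the claimed form. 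Complementary slackness then gives two cases.
\begin{itemize}
\item If the floors alone satisfy $\sum_i d_i/L_i \le 1$, then $\lambda = 0$ and $T_i = L_i$. This is optimal outright, since each $T_i$ sits at its lower bound and the objective is increasing.
\item Otherwise $\lambda > 0$ and the density constraint must hold with equality.
\end{itemize}

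\emph{Bisection.} Let $g(\lambda) := \sum_i d_i/T_i(\lambda)$. It is continuous and nonincreasing in $\lambda$, with $g(0) = \sum_i d_i/L_i$ and $g(\lambda) \to 0$ as $\lambda \to \infty$. When $g(0) > 1$, the intermediate value theorem gives a $\lambda^*$ with $g(\lambda^*) = 1$. Monotonicity justifies bisection, and on $\{\lambda : T_i(\lambda) > L_i \text{ for some } i\}$ the map is strictly decreasing, so $\lambda^*$ is pinned down uniquely enough to determine $T(\lambda^*)$. The pair $(T(\lambda^*), \lambda^*)$ satisfies all KKT conditions, so by convexity $T(\lambda^*)$ is optimal.

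\emph{Main obstacle.} The only delicate point is bookkeeping, not mathematics: checking that the projection form with $\mu_i = a_i - \lambda d_i/T_i^2 \ge 0$ is consistent on the clamped stations, and handling the degenerate case $\lambda = 0$ cleanly. I would also note that uniqueness of the optimal $T$ follows from strict convexity of $d_i/T_i$ on the active coordinates, though the statement does not require it.
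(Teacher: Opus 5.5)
Your proposal is correct and follows essentially the same route as the paper. Both establish convexity, separate the Lagrangian for fixed $\lambda$ into strictly convex per-station terms $a_i T_i + \lambda d_i/T_i$ whose minimizer over the floor is the clipped value $T_i(\lambda)$, and use continuity and monotonicity of the density in $\lambda$ to justify bisection and complementary slackness, with $\lambda = 0$ when the floors alone are feasible. Your additional steps are Slater plus existence, the explicit check $\mu_i = a_i - \lambda^* d_i/T_i^2 \ge 0$ on clamped stations, and uniqueness of $\lambda^*$ once some station is unclamped; these fill in details that the paper's sketch leaves implicit.
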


\begin{proof}[Sketch]
The objective is linear and each $d_i/T_i$ is convex on $T_i > 0$, so
\eqref{eq:relax} is a convex program and KKT conditions are sufficient. For
fixed $\lambda$ the Lagrangian separates across stations; its $i$-th term
$a_i T_i + \lambda d_i/T_i$ is strictly convex with unconstrained minimizer
$\sqrt{\lambda d_i/a_i}$, so the coordinatewise minimizer over the feasible
ray is that value clipped at the floor $\max(\tmin_i, \dmax)$ -- exactly
$T_i(\lambda)$. The density $\sum_i d_i/T_i(\lambda)$ is continuous and
nonincreasing in $\lambda$, so bisection finds the smallest $\lambda$ at
which it meets the density budget ($\lambda = 0$ if the floors alone
satisfy it), which is complementary slackness. Full proof:
\ifdefined\TECHREPORT appendix\else technical report\fi.
\end{proof}

Two implementation-driven design elements deserve emphasis, both discovered
by simulation. First, the \emph{anchor matters}: rounding to powers of two
of an arbitrary base can double a period. Writing $T_{\min} = \min_i T_i$
for the smallest relaxed period, we instead choose the anchor among the
ideal periods folded into $(T_{\min}/2, T_{\min}]$ -- each candidate makes
one station's rounding lossless -- which is provably no worse than the
average over random anchors (Corollary~\ref{cor:anchor}) and empirically
much better. Second, the
\emph{safeguard matters}: when the relaxation's period spread is sub-octave
(all ideal periods within a factor 2), the harmonic grid cannot express the
differentiation and rounding losses dominate. For example, ideal periods of
$30$ and $50$\,ms (ratio $1.67 < 2$) round to the \emph{same} power-of-two
grid level, so the harmonic schedule cannot give the more important station a
shorter period and merely pays the rounding cost; the safeguard then returns
the uniform schedule instead. We call this the \emph{octave quantization}
effect and quantify it in Section~\ref{sec:evaluation}.

\begin{example}[continued]
On the instance of Example~\ref{ex:run}, water-filling clamps station~1 to
its floor and spreads the remaining density by weight:
$T^* = (8, 13.7, 19.3)$\,ms, density exactly 1, relaxed cost
$J^* = \sum_i a_i T_i^* = 55.3$ (a lower bound). Folding $T^*$ into $(4, 8]$
gives the anchor candidates $\{8, 6.83, 4.83\}$; rounding up on each
candidate grid costs $\{64.0,\, 81.9,\, 67.6\}$, so line~3 picks $t_0 = 8$
and $T = (8, 16, 32)$\,ms -- station 1's rounding is lossless and the cost
is $1.16\times$ the relaxation bound. The uniform alternative
$T_u = \max(8, 3 \cdot 4) = 12$\,ms would cost $66.0 > 64.0$, so the
safeguard keeps the harmonic schedule. Packing in increasing-period order
places $\phi = (0, 4, 12)$\,ms, disjoint over the $32$\,ms hyperperiod.
\end{example}

\section{Approximation Guarantees}
\label{sec:theory}

The practical scheduler packs true SP durations at density $0.9$. A
worst-case certificate for that exact configuration would have to survive
adversarial duration mixes; we instead analyze a deliberately conservative
variant of Algorithm~\ref{alg:hg} and charge the difference to the
constants. The analyzed variant differs in two ways. First, it runs the
relaxation at a density target $\rho^\star$ (fixed to $1/4$ in
Theorem~\ref{thm:approx}), leaving the packing stage guaranteed headroom.
Second, it packs dyadic \emph{reservations}: each station reserves the
smallest grid divisor $t_0/2^m \ge d_i$ and transmits only for $d_i$ inside
its reservation, so reservations consume schedule space but no energy.
Packing is small-first best-fit (SF-BF): process stations by increasing
period; place each into a minimum-size free block of sufficient size,
carving from the left and returning the staircase remainder. The analysis
then proceeds in three steps: a structural invariant on the free blocks
SF-BF maintains (Lemma~\ref{lem:counts}), a sufficient packing threshold
(Lemma~\ref{lem:packing}), and the end-to-end guarantees
(Theorems~\ref{thm:approx} and~\ref{thm:uncond}, sharpened by the anchor
rule of Corollary~\ref{cor:anchor}).

\begin{lemma}[count invariant]
\label{lem:counts}
When SF-BF serves a request of size $q$ from a chosen free block of size
$b$, the staircase blocks it returns have sizes in $[q, b)$ -- sizes at
which, by best-fit minimality of $b$, no free block currently exists.
Consequently new free blocks are only created at sizes whose count is
zero, and at the start of level-$j$ processing every size has at most
$2^j$ free blocks. Both bounds are tight.
\end{lemma}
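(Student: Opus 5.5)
The plan is to model SF-BF on the harmonic grid as operations on a multiset of dyadic free blocks. It handles one level at a time: at level $j$ every remaining station has period $t_0 2^j$, and we look at a single window of length $t_0 2^j$. A placement there repeats in every later window, because all later periods are multiples of $t_0 2^j$. I take the free blocks to be aligned dyadic intervals of the form $[k\,t_0/2^m, (k{+}1)\,t_0/2^m)$. Requests are reservations of size $q = t_0/2^m$, so every quantity involved is a power of two times $t_0$. Level $0$ starts with a single free block of size $t_0$.

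\textbf{Step 1 (staircase sizes).} SF-BF carves an aligned $q$ from the left of a block of size $b = 2^r q$. It then returns the remainder $b - q$ as the aligned blocks $[q,2q), [2q,4q), \dots, [b/2, b)$ (offsets relative to the block start). These have sizes $q, 2q, \dots, b/2$, one of each, so all lie in $[q,b)$. Best-fit chose $b$ as the minimum size among free blocks with size at least $q$. Hence, just before the operation, no free block had a size in $[q,b)$. So the operation creates exactly one block at each size whose count was zero, and it removes one block of size $b$. Alignment is preserved, because an aligned dyadic block splits into aligned dyadic pieces.

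\textbf{Step 2 (invariant by induction on $j$).} I claim that at the start of level $j$ every size has count at most $2^j$. The base case holds: at level $0$ there is one block, of size $t_0$. Within level $j$, Step 1 shows that each operation either decreases a count or raises a count from $0$ to $1$. Since $1 \le 2^j$, the bound $\max(\text{count}, 1) \le 2^j$ is preserved throughout the level. For the passage to level $j+1$, the window doubles to $[0, t_0 2^{j+1})$, and each level-$j$ free block reappears in both halves. I will keep the two copies as separate blocks and never merge blocks that touch at the window boundary. This convention only undercounts usable space, so it is harmless for the packing lemma. Each count therefore at most doubles, giving at most $2^{j+1}$.

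\textbf{Step 3 (tightness).} For the size bound $[q,b)$: serving a request of size $q = t_0/2^m$ from the initial block $b = t_0$ returns every size from $q$ up to $t_0/2$. For the count bound: if no station occupies levels $0,\dots,j-1$, the single block of size $t_0$ is duplicated into exactly $2^j$ copies at level $j$. Alternatively, one level-$0$ request followed by empty levels gives $2^j$ blocks at each staircase size.

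The step I expect to need the most care is the level transition. I must argue that the doubling-without-merging bookkeeping is consistent with what SF-BF actually does. Concretely, I would define SF-BF's free list at level $j+1$ as the two shifted copies of the level-$j$ list, so that the count statement refers to that list. I would also check that dyadic alignment rules out any block straddling the two halves of the window.
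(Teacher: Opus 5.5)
Your proposal is correct and follows the paper's proof essentially step for step. The shared steps are: the staircase decomposition combined with best-fit minimality of $b$, so new blocks appear only at sizes whose count was zero; the maximum count cannot grow within a level; each level transition duplicates the free set and so at most doubles every count; and tightness comes from empty intermediate levels, where the doubling compounds to $2^j$. Your explicit no-merging convention and the alignment check at the window boundary make precise what the paper leaves implicit when it speaks of ``duplicating the free set.''
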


\begin{lemma}[packing threshold]
\label{lem:packing}
If every reservation satisfies $d'_i \le t_0 (1 - U)$ with
$U = \sum_i d'_i/T_i$, SF-BF places all stations. Conversely (cf.
Proposition~\ref{prop:ce}), a station of level $\ge 1$ must satisfy
$d' \le t_0(1 - U_0)$ where $U_0$ is the lower-level utilization: the
threshold is tight up to the requesting station's own density share.
\end{lemma}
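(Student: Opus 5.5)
We prove the sufficiency direction by working inside one hyperperiod and tracking free space level by level. All periods lie on the grid $t_0 2^j$ and all reservations are dyadic divisors $t_0/2^m$. Fix the largest level $J$ and view the timeline modulo $t_0 2^J$.

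A level-$j$ station with reservation $d'$ placed at offset $\phi$ occupies $2^{J-j}$ copies. Equivalently, we work in the level-$j$ frame $[0, t_0 2^j)$, which consists of $2^j$ base windows of length $t_0$. Because stations are processed in increasing period order, everything placed before level $j$ has period dividing $t_0 2^j$. Its footprint in the level-$j$ frame is therefore exact: the free set is a union of blocks, and since reservations are dyadic, every free block can be taken dyadic and aligned (this is the "staircase" decomposition of Lemma~\ref{lem:counts}). Entering level $j$, each previous free block is duplicated $2$-fold per level step, which is consistent with the bound of at most $2^j$ blocks per size.

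\textbf{Sufficiency.} Suppose a level-$j$ station requests size $q = d'_i$ and finds no free block of size $\ge q$. Then every free block in the level-$j$ frame has size $< q$. By Lemma~\ref{lem:counts}, at each dyadic size $s < q$ there are at most $2^j$ free blocks. The sizes below $q$ are $q/2, q/4, \dots$, so the total free measure in the frame is less than
\[
2^j\sum_{m\ge1} q 2^{-m} = 2^j q .
\]
The occupied measure in the frame equals $t_0 2^j$ times the utilization of the already-placed stations, which is at most $U - d'_i/T_i < U$. Hence the free measure exceeds $t_0 2^j (1-U)$. Combining the two bounds gives $2^j q > t_0 2^j (1-U)$, i.e.\ $q > t_0(1-U)$, which contradicts the hypothesis. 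So a block of size $\ge q$ always exists.

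Best-fit then carves $q$ from the left of the smallest such block, and the remainder splits into dyadic pieces of sizes in $[q,b)$, as Lemma~\ref{lem:counts} asserts. The alignment is preserved because $q$ divides $b$. One step needs care: a block of size $\ge q$ must also be usable by a single SP start, which requires checking that the placement is consistent across the $2^j$ windows. This holds because the frame is exactly one period of the station.

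\textbf{Converse.} We follow Proposition~\ref{prop:ce}. Let the lower-level stations have periods dividing $t_0$, with utilization $U_0$. They leave the identical free set $S$ of measure $t_0(1-U_0)$ in every base window. A contiguous SP of length $d' \le t_0$ for a level $\ge 1$ station meets at most two consecutive windows, and its pieces land in disjoint parts of the same $S$. Therefore $d' \le |S| = t_0(1-U_0)$. The gap between $U_0$ and $U$ is at most the requester's own share, together with the other higher-level stations, which gives the stated tightness.

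\textbf{Main obstacle.} The hardest step is the free-measure accounting at the moment of failure. We must justify that Lemma~\ref{lem:counts} bounds the free blocks at sizes strictly below $q$ by $2^j$ each in the level-$j$ frame, including those created during level $j$ itself. Blocks returned at level $j$ have size $\ge$ the current request, and requests within a level are not monotone in size. I would handle this by ordering requests within a level by size, or by invoking the invariant that creation happens only at zero-count sizes, so a size's count never exceeds its level-entry bound $2^j$.
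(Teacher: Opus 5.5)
Your proof is correct and follows the paper's own argument. The count invariant, with new blocks created only at zero-count sizes so that every size stays at most $2^j$ throughout level $j$ (your second resolution of the obstacle, which is exactly how Lemma~\ref{lem:counts} is proved), caps the free measure at failure below $2^j q$. Meanwhile the already-placed stations occupy at most $t_0 2^j U$ of the level-$j$ window, which forces $q > t_0(1-U)$, a contradiction.

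The converse differs only slightly. The paper exhibits a level-0 configuration with a single contiguous gap of length $t_0(1-U_0)$, whereas your Proposition~\ref{prop:ce}-style two-pieces argument proves the necessary condition for every level-0 configuration; both are valid. To match the ``tight up to the requester's own share'' phrasing, specialize to instances whose only non-level-0 station is the requester, rather than letting other higher-level stations contribute to $U - U_0$.
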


\begin{proof}[Proof sketch]
Suppose a request of size $q$ fails at level $j$. Every free block is then
smaller than $q$, and since each of the sizes $q/2, q/4, \dots$ carries at
most $2^j$ blocks (Lemma~\ref{lem:counts}), the total free measure is
below $2^j q$. On the other hand, the stations placed so far occupy at
most a fraction $U$ of the level-$j$ window of length $t_0 2^j$, leaving
free measure at least $t_0 2^j (1-U) \ge 2^j q$ under the stated
granularity -- a contradiction. Full proofs are
in \ifdefined\TECHREPORT the appendix\else the technical report\fi, and
all lemmas are additionally machine-checked on $\sim$$3{,}000$ randomized
harmonic instances.
\end{proof}

\begin{theorem}
\label{thm:approx}
Under the granularity assumption $d_i \le t_0/4$, \textsc{Harmonic-Greedy}
(analysis variant, $\rho^\star = 1/4$) returns a feasible schedule of cost
at most $8 \cdot \mathrm{OPT}$.
\end{theorem}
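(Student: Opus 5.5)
The proof splits into three parts: a \emph{feasibility} argument via Lemma~\ref{lem:packing}, and two \emph{cost} losses, one from running the relaxation at density $1/4$ instead of $1$ and one from power-of-two rounding. The two losses should give factors $2$ and $2$. Reservation doubling and the safeguard then have to be absorbed without exceeding $8$.

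\textbf{Step 1 (density scaling).} Let $T^{(1)}$ be the optimum of \eqref{eq:relax} and $T^{(1/4)}$ the optimum at density target $\rho^\star=1/4$. Scaling every period by $4$ turns a density-$1$-feasible point into a density-$1/4$-feasible point, and it preserves the floors $\max(\tmin_i,\dmax)$. But this costs a factor $4$, too much. Use instead the water-filling structure of Lemma~\ref{lem:relax}: the unconstrained part scales like $\sqrt{\lambda}$. Shrinking density by $4$ at most multiplies the $\sqrt{\lambda d_i/a_i}$ terms by $4$ in the worst case. However, the optimal-cost identity $J=\sum_i a_i T_i$ with $T_i=\sqrt{\lambda d_i/a_i}$ gives $J\propto 1/\rho^\star$ only on unclamped stations.

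So I expect a careful bound to be $J^{(1/4)}\le 4J^*$, with rounding contributing at most $2$. The product is $8$, which matches the theorem's constant. I will therefore aim for $J^{(1/4)}\le 4J^*$ and rounding $\le 2$. The proof of the first is direct: $4T^{(1)}$ is feasible for the $\rho^\star=1/4$ program, so $J^{(1/4)}\le 4J^*\le 4\,\mathrm{OPT}$, using that $J^*$ lower-bounds \eqref{eq:problem}.

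\textbf{Step 2 (rounding).} Rounding up on any grid $\{t_0 2^j\}$ at most doubles each period, so the anchor chosen on line~3 costs at most $2J^{(1/4)}$. Rounding only increases periods, so the rounded density is $U\le 1/4$. Energy floors and the $\dmax$ floor are preserved, since periods only grow.

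\textbf{Step 3 (packing with reservations).} Each reservation $d'_i$ is the smallest dyadic divisor $t_0/2^m\ge d_i$. Under $d_i\le t_0/4$ we get $d'_i\le t_0/4$ whenever $t_0/4$ itself is a grid divisor, and $d'_i<2d_i$. Hence the reserved density satisfies $U'\le 2U\le 1/2$.

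Lemma~\ref{lem:packing} needs $d'_i\le t_0(1-U')$, and $t_0(1-U')\ge t_0/2\ge t_0/4\ge d'_i$. So SF-BF succeeds and the schedule is feasible. I also need $t_0\le \min_i T_i$ so that level indices are nonnegative; this holds by construction of the folding. The safeguard on line~6 is only invoked when the uniform schedule is cheaper, and the uniform schedule is feasible, so it cannot hurt.

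\textbf{Main obstacle.} The delicate point is the interplay between the granularity assumption and the reservation rounding. The assumption is stated relative to $t_0$, but $t_0$ is chosen by the algorithm after rounding. I must verify that $d_i\le t_0/4$ yields $d'_i\le t_0/4$, and not merely $d'_i\le t_0/2$. If only $d'_i\le t_0/2$ held, $U'\le 1/2$ would give exactly $t_0(1-U')\ge t_0/2\ge d'_i$, so the condition still closes in either case.

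I will also double-check the claim that the re-expansion step (line~4) keeps density and cost unchanged. It raises $d_i$ at fixed $T_i$, so $a_i$ can only decrease. Reservations might grow, so for the analysis variant I will either skip re-expansion or cap it within the existing reservation.

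Chaining the bounds gives $\text{cost}\le 2J^{(1/4)}\le 8J^*\le 8\,\mathrm{OPT}$.
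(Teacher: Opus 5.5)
Your proposal is correct and follows the paper's proof: a factor $4$ because $4T^{(1)}$ is feasible for the $\rho^\star=1/4$ relaxation, a factor $<2$ from rounding periods up on a grid with $t_0\le T_{\min}$, and dyadic reservations ($d'_i<2d_i$, and $d'_i\le t_0/4$ because $t_0/4$ is always a dyadic divisor) giving $U\le 1/2$ and $d'_i\le t_0(1-U)$, so Lemma~\ref{lem:packing} packs everything and the cost is at most $2\cdot 4=8$ times the bound. The water-filling digression in Step~1 is unnecessary, and in the write-up you should state explicitly that each station stays awake only $d_i$ inside its reservation, so the budget $d_i/T_i\le\rho_i$ survives reservation doubling, as the paper notes. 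Your safeguard and re-expansion remarks are extra care that the paper's proof omits.
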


\begin{corollary}[candidate-argmin anchor]
\label{cor:anchor}
The rounded cost, as a function of the anchor phase
$\theta \in [0,1)$ ($t_0 = T_{\min} 2^{-\theta}$), is piecewise strictly
decreasing with minima exactly at the folded candidates; hence the
candidate-argmin anchor satisfies
$\min_{\mathrm{cand}} C \le \mathbb{E}_\theta[C] = (1/\ln 2)\sum_i a_i T_i$,
improving Theorem~\ref{thm:approx} deterministically to
$4/\ln 2 \approx 5.77$.
\end{corollary}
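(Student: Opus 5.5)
We treat the rounded cost as a function of the anchor phase and bound it in two stages. Write $T_i$ for the relaxed periods from line~2 of Algorithm~\ref{alg:hg} and set $t_0 = T_{\min}2^{-\theta}$.

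\textbf{Step 1: closed form of the rounded cost.} For each station write $T_i = T_{\min}2^{-\theta_i}2^{n_i}$ with folded phase $\theta_i\in[0,1)$; this is exactly the fold into $(T_{\min}/2,T_{\min}]$ used in line~3. Rounding up on the grid $\{t_0 2^j\}$ gives
\[
\lceil T_i\rceil_{t_0} = T_i\, 2^{\,\theta_i-\theta} \ \text{if } \theta\le\theta_i, \qquad
\lceil T_i\rceil_{t_0} = T_i\, 2^{\,1+\theta_i-\theta} \ \text{if } \theta>\theta_i .
\]
So each term is $2^{-\theta}$ times a constant on each side of $\theta_i$. It jumps up by a factor $2$ as $\theta$ crosses $\theta_i$, reading the interval $[0,1)$ cyclically with $\theta=0$ and $\theta=1$ giving the same grid.

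Summing over stations, $C(\theta)$ is $2^{-\theta}$ times a positive piecewise-constant function. It is therefore strictly decreasing on each interval between consecutive folded phases, with upward jumps only at the $\theta_i$. On each piece the infimum is approached at the right endpoint, i.e.\ just before the next candidate. That value is at least the value attained at the candidate itself, since crossing $\theta_i$ only multiplies term $i$ by $2$ while the others are continuous. Hence the minimum over $[0,1)$ is attained at a folded candidate, which is the piecewise-monotone claim. The step needing care is the cyclic wrap at $\theta=0$, i.e.\ $t_0=T_{\min}$, which is itself the candidate of the minimizing station.

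\textbf{Step 2: the averaging bound.} A minimum is at most an average, so $\min_{\mathrm{cand}}C=\min_\theta C\le\mathbb{E}_\theta[C]$ with $\theta$ uniform on $[0,1)$. By linearity it suffices to average each term. The substitution $u=\theta-\theta_i \bmod 1$ makes $\lceil T_i\rceil_{t_0}/T_i = 2^{1-u}$ uniformly, hence
\[
\mathbb{E}\big[\lceil T_i\rceil_{t_0}\big] = T_i\int_0^1 2^{1-u}\,du = T_i\cdot\frac{2-1}{\ln 2} = \frac{T_i}{\ln 2}.
\]
This gives $\mathbb{E}_\theta[C]=(1/\ln 2)\sum_i a_iT_i$.

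\textbf{Step 3: plugging into Theorem~\ref{thm:approx}.} The remaining step is to locate where the factor $8$ in Theorem~\ref{thm:approx} comes from. I expect it to decompose as $4\times 2$:
\begin{itemize}
\item the factor $4$ comes from solving the relaxation at $\rho^\star=1/4$ instead of $1$. Scaling the relaxed optimum periods by $4$ is feasible at density $1/4$, so the relaxed cost is at most $4J^*\le4\,\mathrm{OPT}$.
\item the factor $2$ is the worst-case power-of-two rounding.
\end{itemize}
Packing feasibility under $d_i\le t_0/4$ and $U\le 1/4$ comes from Lemma~\ref{lem:packing}, since $t_0(1-U)\ge 3t_0/4\ge d_i$. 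Reservations at most double $d_i$, so this step also needs checking. Packing does not depend on which anchor is chosen, only on $t_0\ge 4d_i$. The granularity assumption is stated in terms of the chosen $t_0$, so it applies to the candidate anchor as well.

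Replacing the factor $2$ by $1/\ln 2$ then yields $4/\ln 2$. The safeguard in line~6 only ever lowers the cost.

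The main obstacle is confirming that Theorem~\ref{thm:approx}'s $8$ genuinely factors this way. That is, the reservation doubling must be absorbed by the density headroom rather than into the cost, so that only the rounding factor is improved.
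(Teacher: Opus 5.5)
Your proposal is correct and follows the paper's proof essentially step for step: your folded phases $\theta_i$ are the paper's breakpoints $1-\mathrm{frac}(y_i)$, and left-continuity at each candidate puts the minimum on the candidates (the value at a candidate equals the left limit, so your ``at least'' is in fact an equality). The uniform-phase average $\int_0^1 2^{1-u}\,du = 1/\ln 2$ then gives the expectation bound, and the decomposition you anticipated for Step 3 is exactly the paper's chain $2\cdot 4J^*$: dyadic reservations only raise the density to $U \le 1/2$, so $d'_i \le t_0/2 \le t_0(1-U)$ and packing succeeds, while the cost depends only on the periods, so only the period-rounding factor $2$ is replaced by $1/\ln 2$.
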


\begin{theorem}[unconditional]
\label{thm:uncond}
Without any granularity assumption, scaling the relaxation solution by
$s = 6$ (which simultaneously enforces the density budget and, via
Lemma~\ref{lem:longsp}, the long-SP margin $t_0 > 3\dmax$) yields a
$12$-approximation, and $6/\ln 2 \approx 8.66$ with the candidate-argmin
anchor. The constant $6$ is optimal for this accounting.
\end{theorem}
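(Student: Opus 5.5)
The plan is to mirror the proof of Theorem~\ref{thm:approx}, replacing the granularity assumption with a scaling step. Let $T^*$ solve the relaxation \eqref{eq:relax} with density target $1$, so $\sum_i a_i T_i^* = J^* \le \mathrm{OPT}$. Define the scaled periods $\tilde T_i := s T_i^*$. These are feasible for the energy floors, because $\tilde T_i \ge T_i^* \ge \tmin_i$. Their density is at most $1/s$. The scaled cost is $s J^*$.

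Next, round $\tilde T_i$ up on the grid $\{t_0 2^j\}$. Rounding costs at most a factor $2$ deterministically, which gives $2s$. Taking the expectation over a random anchor phase gives the factor $1/\ln 2$, and Corollary~\ref{cor:anchor} transfers this expectation bound to the candidate-argmin anchor. Rounding up only lowers each station's density, so the rounded density stays at most $1/s$.

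Third, pack with dyadic reservations, each of size $d'_i < 2d_i$. The reserved density is then $U < 2/s$. The anchor satisfies $t_0 > \tilde T_{\min}/2 \ge s\,\dmax/2$, using the floor $T_i^* \ge \dmax$ from Lemma~\ref{lem:longsp}. Since every reservation is a grid divisor $t_0/2^m \ge d_i$, at most $t_0$, I need a sharper bound on $d'_i$. For $d_i \le t_0/2$ the reservation is at most $2d_i$, so it suffices to check
\[
d'_i \le 2\dmax \le t_0(1-U).
\]
With $t_0 > s\dmax/2$ and $U < 2/s$, this reduces to $2 \le (s/2)(1-2/s) = (s-2)/2$, that is, $s \ge 6$. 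At $s = 6$ this is exactly the margin $t_0 > 3\dmax$, and Lemma~\ref{lem:packing} then guarantees SF-BF places every station. The cost bound is $2\cdot 6 = 12$ deterministically, or $6/\ln 2$ with the candidate anchor.

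The main obstacle is the bookkeeping in this packing step. Two things need care. First, the reservation rounding of $d_i$ to the dyadic divisor $t_0/2^m$ loses up to a factor $2$; I have to confirm that $U$ is measured with reservations, so the density is $2/s$ and not $1/s$. Second, the anchor candidate is the folded minimum, and the fold into $(T_{\min}/2, T_{\min}]$ costs another factor $2$ against $\dmax$. These two losses must be combined exactly so that the inequality $(s-2)/2 \ge 2$ comes out with threshold $6$.

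For optimality of the constant, I would show that for every $s < 6$ an instance saturates both losses at once. Such an instance has a station with $d$ just above a dyadic divisor, a minimum period at the floor $\dmax$, and an anchor at the bottom of the fold. On it the sufficient condition of Lemma~\ref{lem:packing} fails, and by the converse part of that lemma (cf. Proposition~\ref{prop:ce}) the failure is genuine. This shows that $6$ is tight for this accounting.
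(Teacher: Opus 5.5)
\textbf{Approximation bound: matches the paper.} Your argument for the bound itself is the paper's proof. You scale the relaxation solution (with its $\dmax$ floor) by $s$ and round periods at a factor below $2$ (or $1/\ln 2$ via Corollary~\ref{cor:anchor}). Dyadic reservations at most double the density, to $U<2/s$. You then use $t_0>\hat T_{\min}/2\ge s\dmax/2$ and check $d'_i<2\dmax\le t_0(1-U)$, which reduces to $(s-2)/2\ge 2$. At $s=6$ this gives $12$ and $6/\ln 2$, so that part is fine.

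\textbf{Optimality of $6$: this is where the proposal fails.} In the paper, ``optimal for this accounting'' means only that the chain built from the bounds $d'_i<2\dmax$, $t_0>(s/2)\dmax$ and $U<2/s$ closes iff $s\ge 6$. You already derived that, so one sentence would suffice. You instead set out to prove something stronger: that for every $s<6$ some instance violates the hypothesis of Lemma~\ref{lem:packing} and genuinely fails to pack. That statement is false, so your construction cannot be carried out.

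\textbf{Why no such instance exists.} A reservation is a dyadic divisor of $t_0$, so the two factor-$2$ losses cannot be saturated at the same time. For any $s\ge 4$ you have $t_0>s\dmax/2\ge 2\dmax$. Hence every $d_i\le\dmax<t_0/2$, and so $d'_i\le t_0/2$. Together with $U<2/s\le 1/2$ this gives $t_0(1-U)>t_0/2\ge d'_i$ on \emph{every} instance. So for $s\in[4,6)$ the hypothesis of Lemma~\ref{lem:packing} always holds and SF-BF always succeeds.

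Concretely, your witness needs $d'_i$ close to $2\dmax$ and $\dmax$ close to $2t_0/s$. That puts $d'_i$ close to $(4/s)t_0\in(2t_0/3,\,t_0)$, an interval containing no admissible dyadic divisor of $t_0$; the largest available one is $t_0/2$.

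The converse part of Lemma~\ref{lem:packing} cannot supply a genuine failure either. It would need $d'_i>t_0(1-U_0)>t_0(1-2/s)\ge t_0/2$. That forces $d'_i=t_0$, i.e.\ $d_i>t_0/2$, which is excluded.

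The same observation in fact closes the chain already at $s=4$, giving $8$ and $4/\ln 2$ with no granularity assumption. This is why $6$ can only be called optimal relative to the crude bound $d'_i<2\dmax$.

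\textbf{Fix.} Drop the tightness construction. Replace it with the one-line remark that, under the bounds used in this proof, $2\dmax\le\frac{s}{2}\dmax\bigl(1-\frac{2}{s}\bigr)$ is equivalent to $s\ge 6$.
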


\begin{example}[concluded]
On Example~\ref{ex:run} after rounding, $U = 4/8 + 4/16 + 4/32 = 0.875$, so
Lemma~\ref{lem:packing} would require $d \le t_0(1-U) = 1$\,ms; the instance
violates the threshold ($d = 4$\,ms) yet packs anyway -- the condition is
sufficient, not necessary, and the certified variant's $\rho^\star = 1/4$
exists precisely to guarantee it in the worst case. The achieved cost,
$1.16\times$ the relaxation bound, is typical of the gap between certificate
and behavior: across every regime of Table~\ref{tab:regimes} the practical
variant lands within $1.05$--$1.25\times$ the relaxation lower bound --
itself unachievable in general -- against the $5.77\times$ worst-case
certificate. The bound's role is to exclude pathological instances, not
to predict typical loss.
\end{example}

\begin{proposition}[peak AoI]
\label{prop:peak}
All of the above holds verbatim for weighted peak AoI: by
Proposition~\ref{prop:renewal} the objective is again linear in $T$ with
coefficients $a^{\mathrm{peak}}_i = w_i/p_i$.
\end{proposition}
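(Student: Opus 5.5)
The plan is to re-run the cost-accounting chain behind Theorems~\ref{thm:approx} and~\ref{thm:uncond} and Corollary~\ref{cor:anchor}, checking that each step uses only two properties of the objective. First, the objective must be linear in the periods, $\sum_i a_i T_i$ plus a $T$-independent constant, with positive coefficients. Second, the lower bound must come from a relaxation whose feasible set depends only on $(d_i, \tmin_i, \dmax)$ and not on the weights.

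The starting point is Proposition~\ref{prop:renewal}. It gives $\bar A_i^{\mathrm{peak}} = \delta_i + T_i/p_i$, so the weighted peak objective is $\sum_i w_i\delta_i + \sum_i a^{\mathrm{peak}}_i T_i$ with $a^{\mathrm{peak}}_i = w_i/p_i > 0$. This has exactly the form $J(T) = \sum_i a_i T_i + C$ used in \eqref{eq:problem}.

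Next we check each downstream result under the substitution $a_i \to a^{\mathrm{peak}}_i$.
\begin{itemize}
\item Proposition~\ref{prop:ce} and Lemma~\ref{lem:longsp} concern feasibility only, so they carry over unchanged.
\item The relaxation \eqref{eq:relax} has the same constraint set. Lemma~\ref{lem:relax} uses only $a_i > 0$, so it again gives $T_i = \max(\tmin_i, \dmax, \sqrt{\lambda d_i/a^{\mathrm{peak}}_i})$, and $J^*$ still lower-bounds OPT by the same argument.
\item Lemmas~\ref{lem:counts} and~\ref{lem:packing} are purely about packing, so they hold verbatim.
\item The approximation proofs use two facts: rounding up on a power-of-two grid at most doubles each $T_i$, and the scaling by the density target or by $s=6$ multiplies every period by a common factor. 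Both are termwise bounds on the $T_i$, and with nonnegative coefficients they transfer to any linear cost.
\item The anchor averaging identity $\mathbb{E}_\theta[\lceil T_i\rceil_{t_0}] = T_i/\ln 2$ holds per station. Summing with weights $a^{\mathrm{peak}}_i$ gives Corollary~\ref{cor:anchor}, and the candidate-argmin argument needs only nonnegative coefficients for the piecewise-decreasing structure.
\item The safeguard compares two linear costs, so it is unaffected.
\end{itemize}

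The main point needing care is the additive constant $C = \sum_i w_i\delta_i$. A multiplicative guarantee on $\sum_i a_i T_i$ transfers to the full objective $C + \sum_i a_i T_i$ with the same or a better constant, because $C \ge 0$ is paid identically by the algorithm and by OPT: $(C + \alpha X) / (C + X) \le \alpha$ for $\alpha \ge 1$.

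One subtlety is that $\delta_i$ and $p_i$ depend on $k_i$, which line~1 and line~4 choose. This is handled the same way as in the average-AoI case. Either the $k_i$ are treated as fixed inputs when comparing with OPT, or the re-expansion step is shown to only raise $p_i$ at fixed $T_i$ without increasing $\delta_i$ beyond what OPT pays, so it never hurts. If the time-average results already fix the SP sizing before the comparison, that convention is simply inherited here.
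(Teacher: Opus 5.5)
Your proposal is correct and takes the same approach as the paper. The paper's justification is the one line in the statement: Proposition~\ref{prop:renewal} makes the weighted peak objective $\sum_i w_i\delta_i + \sum_i (w_i/p_i)T_i$, and every downstream step (relaxation, rounding, anchor averaging, packing, safeguard) uses only linearity in $T$ with positive coefficients and weight-independent constraints. Your item-by-item checks and your handling of the additive constant $C$ spell out what the paper asserts there.
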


For $N = 2$ the problem admits an \emph{exact} solution, with a
structural twist:

\begin{theorem}[exact $N{=}2$]
\label{thm:n2}
Feasibility forces a rational period ratio $T_2/T_1 = p/q$ with
$T_1 \ge q(d_1 + d_2)$. For each $q$ the cost
$J(r,q) = (a_1 + a_2 r)\max(\tmin_1, \tmin_2/r, q(d_1{+}d_2))$ is quasiconvex
in $r = p/q$ -- strictly decreasing while the $\tmin_2/r$ term is active,
then strictly increasing. Its integer-numerator optimum is therefore one
of the two grid neighbors of the continuous minimizer, and $q$ ranges over
a finite, explicit set, yielding an exact $O(Q)$ algorithm (verified
against exhaustive search on $4{,}000$ instances). Notably, \emph{non-integer} ratios
are strictly optimal on an open set of instances -- both energy floors can
bind only at a fractional ratio -- so the harmonic restriction has a real
price already at $N = 2$, precisely the rounding loss our analysis charges
for. Full proof: \ifdefined\TECHREPORT appendix\else technical report\fi.
\end{theorem}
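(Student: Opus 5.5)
The proof has three parts: which period ratios permit disjoint SPs, quasiconvexity of the cost for a fixed denominator $q$, and finiteness of the relevant $q$.

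\textbf{Feasibility.} Write the ratio as $r = T_2/T_1$. If $r$ is irrational, the wake starts of station~2 taken modulo $T_1$ are dense in $[0,T_1)$ by Weyl/Kronecker. Some start then lands strictly inside a station-1 SP, so the SPs overlap. Hence $T_2/T_1 = p/q$ in lowest terms.

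With a rational ratio, the combined pattern has hyperperiod $pT_1 = qT_2$. The set of differences $\{\phi_2 + kT_2 - \phi_1 - \ell T_1\}$ is the coset $(\phi_2-\phi_1) + (T_1/q)\mathbb{Z}$, because $\gcd(p,q)=1$ makes $\{kp - \ell q\}$ equal to all of $\mathbb{Z}$. Disjointness requires every such difference to avoid $(-d_2, d_1)$, an open interval of length $d_1+d_2$. A lattice of spacing $T_1/q$ avoids an open interval of that length for some phase if and only if $T_1/q \ge d_1+d_2$. This gives $T_1 \ge q(d_1+d_2)$, and the same argument shows the condition is sufficient. Symmetrically, $T_2 \ge p(d_1+d_2)$, which is the same condition since $T_2/p = T_1/q$.

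\textbf{Cost for fixed $q$.} For fixed $q$ and $r = p/q$, minimizing $a_1T_1 + a_2 rT_1$ means taking $T_1$ as small as the constraints allow: $T_1 \ge \tmin_1$, $T_1 \ge \tmin_2/r$, and $T_1 \ge q(d_1+d_2)$. This gives exactly the stated $J(r,q)$.

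For quasiconvexity, I write $J = \max(\alpha(r), \beta(r), \gamma(r))$. Here $\alpha = (a_1+a_2 r)\tmin_1$ and $\gamma = (a_1+a_2 r)q(d_1+d_2)$ are increasing in $r$. The term $\beta = (a_1/r + a_2)\tmin_2$ is strictly decreasing. The maximum of one decreasing function and increasing ones is decreasing until $\beta$ stops being the largest term, and increasing afterwards. So $J$ is quasiconvex, with minimizer $r^*$ at the crossing point. Restricted to the grid $p/q$, the optimum is therefore at $\lfloor qr^*\rfloor$ or $\lceil qr^*\rceil$.

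Non-reduced fractions $p/q$ only weaken the constraint relative to their reduced form. So taking the minimum over all $q$ is still correct, and the gcd issue can be ignored.

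\textbf{Finite range of $q$.} This is the step I expect to be the main obstacle: pinning down an explicit bound $Q$. Any feasible candidate costs at least $q(d_1+d_2)(a_1+a_2 r)$, and $r \ge$ some positive floor. The uniform schedule, $q=p=1$ with $T = \max(\tmin_1,\tmin_2,d_1+d_2)$, gives an upper bound $J_u$. So only $q \le Q := \lceil J_u/\big((d_1+d_2)\min\text{-coefficient}\big)\rceil$ can beat it. I would bound the coefficient by using $a_1 + a_2 r \ge a_1$, which is plain but enough for explicitness.

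\textbf{Strictness of non-integer ratios.} I would give an explicit open instance. Take $\tmin_1 = 2$, $\tmin_2 = 3$, small $d$, and $a_2 \gg a_1$. The continuous optimum is then $r = 3/2$ with both floors binding. This costs strictly less than both integer neighbours $r=1$ ($T=(3,3)$) and $r=2$ ($T=(2,4)$) when $a_1 = a_2$: the costs are $5 < 6$. Strict inequality persists under small perturbations, so this holds on an open set.
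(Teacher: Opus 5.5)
Your proposal is correct and follows essentially the paper's route: density of an irrational rotation rules out irrational ratios, and your coset $(\phi_2-\phi_1)+(T_1/q)\mathbb{Z}$ (the paper's $q$ equally spaced phases modulo $T_1$) gives feasibility iff $T_1 \ge q(d_1+d_2)$. The remaining steps also match: quasiconvexity from one strictly decreasing term against increasing ones, a bound on $q$ from comparing a positive coefficient times $q(d_1+d_2)$ with a feasible $q=1$ candidate, and a ratio-$3/2$ instance extended to an open set by continuity. Two cosmetic slips: a non-reduced $p/q = p'/q'$ with $q'<q$ \emph{tightens} rather than weakens the floor, since $q(d_1+d_2) > q'(d_1+d_2)$, which is exactly why $J(p/q,q)$ overestimates the true cost and the gcd can be ignored; and the condition $a_2 \gg a_1$ in your example is unnecessary, since any $a_1,a_2>0$ works, as your own $a_1=a_2$ check shows.
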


\ifdefined\TECHREPORT\else
Full proofs of all results appear in the appendix of the technical-report
version.
\fi

\section{Implementation in ns-3}
\label{sec:implementation}

We implement the full system in ns-3.\footnote{Code, data, and a
reproduction guide are available at
\url{https://github.com/Haoyu2/age-optimal-twt} -- an ns-3 \texttt{contrib}
module plus the \texttt{src/wifi} mechanism patch, with the raw sweep results,
released to support artifact evaluation.\ifdefined\TECHREPORT\else{} The
extended version with full proofs is available as arXiv:2608.21596.\fi} It comprises a
\TwtPSM{}
realizing individual implicit TWT as a subclass of the 802.11 power-save
architecture, and the scheduler as a pluggable policy module. Three
mechanism elements proved essential, each corresponding to a measured
failure mode:
(i) \emph{deferral, not wake-on-traffic}: channel access requested outside
an SP must not wake the radio -- queued frames contend automatically at the
next SP start;
(ii) \emph{SP-boundary queue gating}: without it, pending retransmissions
keep the station awake until the fade lifts (we measured a retry storm of
$1.5\times10^5$ transmissions), erasing the energy savings. We gate by
\emph{freezing} rather than dropping: at SP end we mark the station's
per-access-category EDCA queues blocked through ns-3's existing power-save
block-reason mechanism (the same hook the standard PS manager uses), and
release it at the next SP start. Blocked frames are not dequeued for
transmission, so their sequence numbers, retry counts, and Block-Ack state
are preserved untouched; the deferred updates simply ride the next SP. This
restores the exact $k$-attempts-per-SP structure the model assumes and keeps
the PHY in sleep between SPs;
(iii) \emph{TSF-anchored offsets}: per-station offsets must live on a common
absolute time grid -- otherwise the computed disjointness has no shared
reference and is silently lost.
We further document protocol findings on TWT deployment: beacon-loss
monitoring disassociates dozing stations; downlink frames (including
ADDBA responses) buffered for a dozing station deadlock the uplink through
asymmetric Block Ack state; and doze announcements are best-effort by
nature -- a PM indication cannot be acknowledged over a channel the AP
cannot hear, which is precisely why TWT is schedule-based.

\textbf{Scheduler cost.} The scheduler runs on the AP once per
(re)association or renegotiation epoch, not per frame. Lines~1--2 of
Algorithm~\ref{alg:hg} are $O(N)$ (the bisection runs a fixed number of
iterations to machine precision), anchor selection evaluates $N$ candidate
grids at $O(N)$ each, and packing touches each SP instance of the
hyperperiod. Measured on a single commodity-VM core (median of 20 runs,
heterogeneous synthetic workloads): $0.7$\,ms at $N = 64$ stations,
$9$\,ms at $N = 256$, and $43$\,ms at $N = 512$ -- negligible against
association signaling. The
runtime state is one triple per station plus the per-AC gating flags, and
the data path adds only the two $O(1)$ queue gate/ungate operations per SP
boundary, reusing the existing power-save block-reason hook.

\section{Evaluation}
\label{sec:evaluation}

\textbf{Setup.} All experiments use 802.11ax at 2.4\,GHz over a 20\,MHz
channel (HeMcs3 unless stated otherwise), with 10--50 stations of
heterogeneous weights, duty budgets, and per-SP fading. Each station
delivers one status update per wake interval, and the AP integrates AoI
exactly, event by event. Runs last $300$\,s over 10 seeds; we report
$95\%$ confidence intervals.

\textbf{Baselines.} We compare \textsc{Harmonic-Greedy} against two
baselines. \emph{Equal-interval} gives all stations a single common period
(the naive TWT schedule). \emph{Energy-greedy} is a strong heuristic in the
spirit of energy-efficiency-first TWT schedulers such as
TASPER~\cite{tasper25}: it gives each station its shortest budget-feasible
period $\max(\tmin_i, \dmax)$, uniformly stretches to feasibility if
overbooked, and reuses our rounding, packing and safeguard. It is
period-\emph{diverse} but AoI-weight-\emph{unaware}, so it isolates the value
of our weighted period assignment: any margin of \textsc{Harmonic-Greedy}
over Energy-greedy is attributable to that step alone. The two baselines let
us separate two questions -- does period diversity help (Energy-greedy
vs.\ Equal-interval), and does AoI-weighting help beyond it
(\textsc{Harmonic-Greedy} vs.\ Energy-greedy). Slot-based AoI policies
(max-weight, Whittle index~\cite{kadota18,tripathi24}) are not directly
comparable: they assume the scheduler can reach any station in any slot,
whereas a dozing TWT station is unreachable between its negotiated SPs, and
re-deriving index policies over wake agreements is the stochastic-arrivals
extension of Section~\ref{sec:conclusion}. The relaxation lower bound plays
the comparison role instead: it bounds the cost of \emph{every} feasible
TWT schedule -- hence of any AoI policy adapted to TWT -- and
\textsc{Harmonic-Greedy} stays within $5$--$25\%$ of it below.

\textbf{Energy--freshness frontier.} Figure~\ref{fig:pareto} sweeps the
duty-cycle budgets ($8\times$ weight skew): the schedulers trace the
AoI--energy trade-off, with \textsc{Harmonic-Greedy} dominating in the
energy-constrained region where ideal-period spread exceeds an octave and
coinciding with the baseline where the safeguard detects collapse.

\begin{figure}[t]
\centering
\IfFileExists{figures/pareto.pdf}{%
\includegraphics[width=0.95\linewidth]{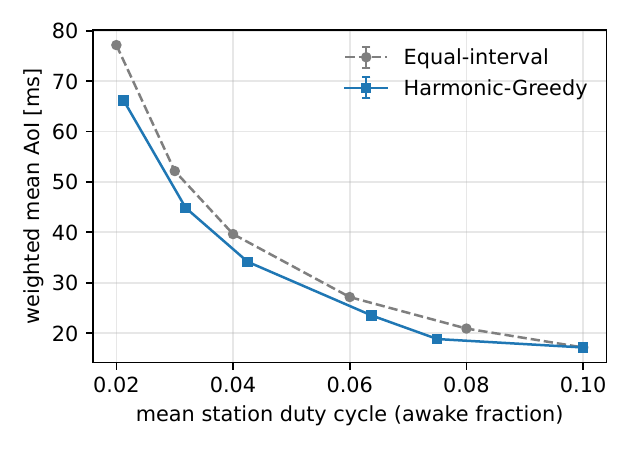}}{%
\fbox{\parbox{0.9\linewidth}{\centering\vspace{2em}
Figure placeholder: AoI--energy Pareto frontier
(\texttt{scripts/make\_pareto.py})\vspace{2em}}}}
\caption{Weighted mean AoI vs.\ mean duty cycle across energy budgets
($8\times$ weight skew, 10 stations, 5 seeds).}
\label{fig:pareto}
\end{figure}

\textbf{Model validation.} (Fig.~\ref{fig:validation},
Table~\ref{tab:validation}) Across the loss grid
$e \in \{0, 0.1, \dots, 0.6\}$, measured AoI tracks
$\delta + T(1/p - 1/2)$ with mean error ${\sim}1\%$: $0.003\%$ at $p = 1$
on every seed, and a worst single seed of $4.4\%$ at $p{=}0.4$ --
consistent with the geometric-tail variance of a $300$\,s run. The
measured renewal components $\mathbb{E}[G]$ and
$\mathbb{E}[G^2]/2\mathbb{E}[G]$ are within $1\%$ of their predicted
values $T/p$ and $T(1/p - 1/2)$.

\begin{figure}[t]
\centering
\IfFileExists{figures/validation.pdf}{%
\includegraphics[width=0.92\linewidth]{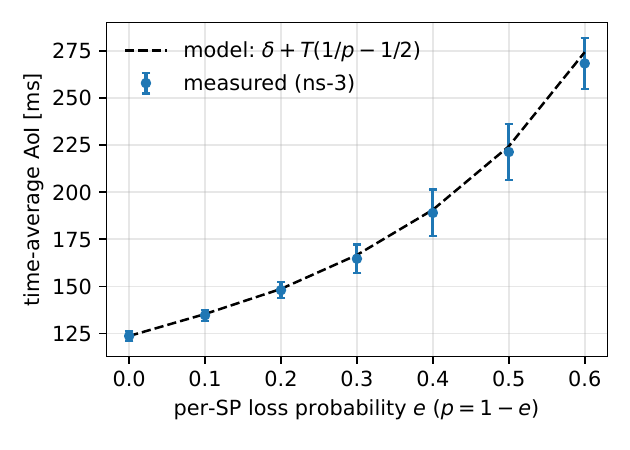}}{}
\caption{Single-station validation: measured time-average AoI (3 seeds,
$95\%$ CI) vs the renewal model, across per-SP loss probabilities
($T{=}100$\,ms, $d{=}8$\,ms).}
\label{fig:validation}
\end{figure}

\begin{figure}[t]
\centering
\IfFileExists{figures/regimes.pdf}{%
\includegraphics[width=0.92\linewidth]{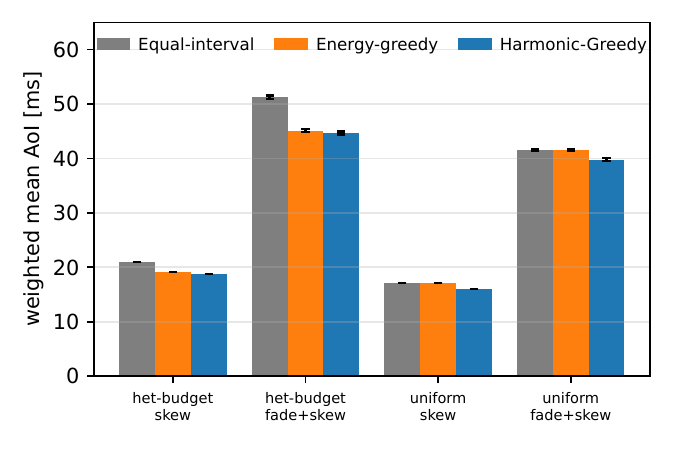}}{}
\caption{Weighted mean AoI by regime (10 seeds, $95\%$ CI). With
heterogeneous budgets (left) the strong energy-greedy baseline already
matches \textsc{Harmonic-Greedy}; with uniform budgets and binding density
(right) it collapses to equal-interval and only AoI-weighting wins.}
\label{fig:regimes}
\end{figure}

\textbf{Period diversity vs.\ AoI-weighting.} (Fig.~\ref{fig:regimes},
Table~\ref{tab:regimes}) The two effects separate cleanly by regime.
\emph{When budgets are heterogeneous}, the per-station energy floors
$\tmin_i = d_i/\rho_i$ already pin the periods, so ``minimize every period
then pack'' is near-optimal: Energy-greedy captures essentially the whole
gain over equal-interval and \textsc{Harmonic-Greedy} only matches it
(het-budget skew: $\ResSkewHarmonic$ vs $\ResSkewEnergy$\,ms, both vs
$\ResSkewEqual$ for equal-interval; fading+skew similar). We report this
plainly -- period diversity, not AoI-weighting, drives those gains, and the
LB column confirms both schedulers already sit near the optimum there.
\emph{When budgets are uniform}, Energy-greedy collapses to equal-interval
(identical floors give identical periods), and only the AoI-weighted
assignment redistributes the binding schedule density toward the stations
that matter: under uniform budgets with binding density,
\textsc{Harmonic-Greedy} wins $\ResUbindGainE\%$ with weight skew
($\ResUbindHarmonic$ vs $\ResUbindEnergy$\,ms) and $\ResUbindfadeGainE\%$
with skew plus fading -- the latter via the channel-aware coefficient
$a_i \propto 1/p_i$. This is the regime that isolates the paper's
contribution. Weighted \emph{peak} AoI tracks the same pattern
(Proposition~\ref{prop:peak}); duty cycles equal $d_i/T_i$ to four decimals
and SPs never overlap in all runs.

\textbf{Near-optimality.} The LB column of Table~\ref{tab:regimes} is the
relaxation optimum -- a lower bound on \emph{any} feasible schedule.
\textsc{Harmonic-Greedy} is the only scheduler that stays close to LB in
\emph{every} regime; Energy-greedy is near-LB when energy floors pin the
periods but drifts away once the binding density must be redistributed by
weight or channel rather than by budget. The residual gap to LB is the price
of harmonic rounding and contiguous (non-preemptive) packing, which
Proposition~\ref{prop:ce} shows is unavoidable in general.

\textbf{Scaling and decomposition.} (Fig.~\ref{fig:scaling},
Fig.~\ref{fig:mixedmcs}) The pipeline runs cleanly to $N = 50$ stations
(exact duty cycles, zero SP overlap, predictions within $2\%$).
Heterogeneous SP sizes -- alternating HeMcs0/HeMcs7, $700$-byte updates,
airtimes differing ${\sim}3\times$ -- give the cleanest decomposition of the
two effects at scale (3 seeds, CIs within $0.1$\,ms). At $N{=}50$:
equal-interval $38.5$\,ms; Energy-greedy $29.5$\,ms (a $23\%$ gain from
period diversity alone, as fast stations need not wait for slow ones); and
\textsc{Harmonic-Greedy} $24.6$\,ms (a further $17\%$ from AoI-weighting, for
$36\%$ over equal-interval). Both ingredients contribute and
\textsc{Harmonic-Greedy} beats the strong baseline outright here, because
diverse airtimes create both period diversity \emph{and} a binding density
that AoI-weighting can exploit. ($N{=}20$ shows the same split:
$16.8 \to 13.3 \to 11.2$\,ms.)

\begin{figure}[t]
\centering
\IfFileExists{figures/mixedmcs.pdf}{%
\includegraphics[width=0.92\linewidth]{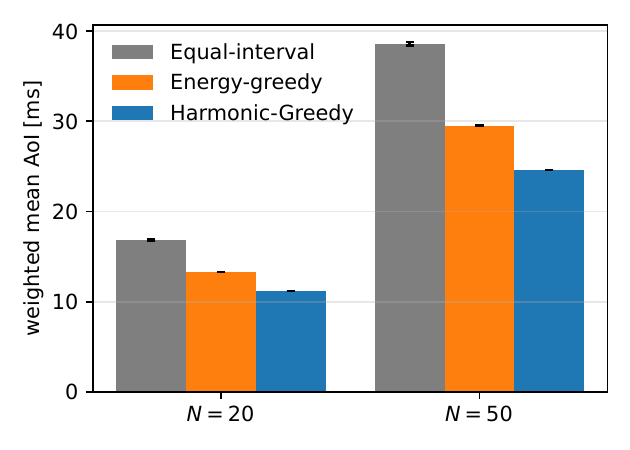}}{}
\caption{Heterogeneous modulation rates (HeMcs0/HeMcs7 alternating,
$700$-byte updates, $8\times$ weight skew, 3 seeds): the gain decomposes into
period diversity (energy-greedy over equal-interval) and AoI-weighting
(\textsc{Harmonic-Greedy} over energy-greedy); $36\%$ total at $N{=}50$.}
\label{fig:mixedmcs}
\end{figure}

\textbf{Certified vs.\ practical variant.} \textsc{Harmonic-Greedy} has two
configurations that share the relaxation, anchor rule, and safeguard but
differ in the packing stage (Sec.~\ref{sec:algorithm}). The
\emph{certified} variant (dyadic reservations, $\rho^\star{=}1/4$) is the
one Theorems~\ref{thm:approx}--\ref{thm:uncond} bound; the \emph{practical}
variant ($\rho^\star{=}0.9$, first-fit of true SP durations) is the one that
produces the gains above. The conservative $\rho^\star{=}1/4$ is forced by
the two factor-of-two losses the worst-case packing analysis must absorb in
series: rounding each SP duration up to a dyadic reservation can double the
schedule density, and the packing threshold (Lemma~\ref{lem:packing}) admits
a request only when $d' \le t_0(1-U)$, which the count invariant satisfies
for $U \le 1/2$; starting from $\rho^\star{=}1/4$ leaves exactly the headroom
for both. The practical variant instead packs the true SP durations
greedily, so neither loss applies and density $0.9$ is feasible -- but it has
no worst-case certificate. The two are thus not interchangeable: the theorem certifies the
\emph{template} (relaxation, anchor rule, packing, safeguard), while the
high-density practical variant is what we evaluate throughout. Under its
conservative budget the certified variant's safeguard returns the uniform
schedule in the heterogeneous-budget regimes, so it tracks equal-interval
there rather than the practical variant; closing this gap -- a tighter
packing analysis that certifies a density target near $0.9$ -- is open
(Sec.~\ref{sec:conclusion}).

\begin{figure}[t]
\centering
\IfFileExists{figures/scaling.pdf}{%
\includegraphics[width=0.92\linewidth]{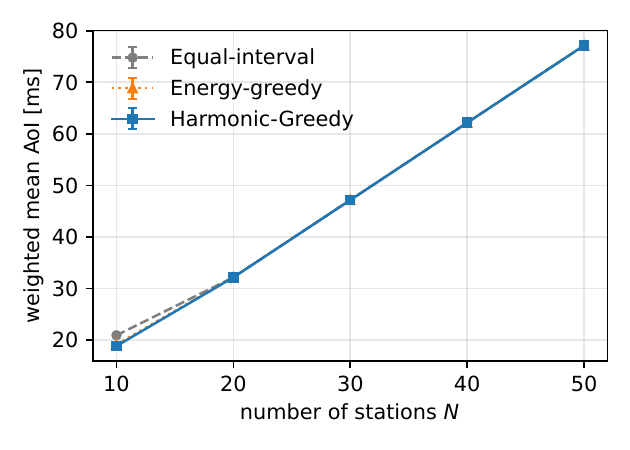}}{}
\caption{Scaling with the number of stations ($8\times$ weight skew,
heterogeneous budgets, 3 seeds): all three schedulers track closely once the
schedule is capacity-bound (energy floors pin the periods).}
\label{fig:scaling}
\end{figure}

\begin{table}[t]
\centering
\caption{Single-station model validation (summary).}
\label{tab:validation}
\begin{tabular}{lcc}
\toprule
Regime & Predicted & Error \\
\midrule
$p\approx 1$, $T{=}100/8$\,ms & $\delta + T/2$ & $0.016\%$ \\
$p\approx 1$, $T{=}50/4$\,ms & $\delta + T/2$ & $0.065\%$ \\
block fading $p{=}0.8/0.6/0.4$ & $\delta + T(1/p - 1/2)$ &
  $\le 1.2\%$ (mean) \\
\bottomrule
\end{tabular}
\end{table}

\begin{table}[t]
\centering
\caption{Weighted mean AoI (ms; mean $\pm$ 95\% CI over 10 seeds),
10 stations, 300\,s. Equal = equal-interval, EG = energy-greedy (strong
baseline), \textsc{H-G} = ours, LB = relaxation lower bound (unachievable).
Top: heterogeneous budgets -- EG already near-optimal, \textsc{H-G} matches.
Bottom: uniform budgets, binding density -- EG collapses to Equal, only
\textsc{H-G} (AoI-weighting) wins.}
\label{tab:regimes}
\begin{tabular}{lcccc}
\toprule
Regime & Equal & EG & \textsc{H-G} & LB \\
\midrule
het.\ budget, skew & \ResSkewEqual & \ResSkewEnergy & \ResSkewHarmonic &
  \ResSkewLB \\
het.\ budget, fade+skew & \ResFadingSkewEqual & \ResFadingSkewEnergy &
  \ResFadingSkewHarmonic & \ResFadingSkewLB \\
\midrule
unif.\ budget, skew & \ResUbindEqual & \ResUbindEnergy &
  \textbf{\ResUbindHarmonic} & \ResUbindLB \\
unif.\ budget, fade+skew & \ResUbindfadeEqual & \ResUbindfadeEnergy &
  \textbf{\ResUbindfadeHarmonic} & \ResUbindfadeLB \\
\bottomrule
\end{tabular}
\end{table}

\section{Conclusion}
\label{sec:conclusion}

TWT turns freshness-versus-energy into a schedule design problem. We
formulated it with AoI as the objective, derived a renewal-exact model
validated at the packet level, exposed the non-preemptive structure that
separates TWT packing from classical utilization bounds, and built
\textsc{Harmonic-Greedy} -- to our knowledge the first TWT scheduler with
approximation guarantees -- together with an exact $N{=}2$ solution and an
open ns-3 implementation.

For practitioners, the evaluation reads as a regime map. When per-station
energy floors pin the periods, energy-first scheduling is already
near-optimal and AoI-weighting adds nothing. When the schedule density
binds under uniform budgets or mixed
modulation rates, AoI-weighted assignment is the only policy that
redistributes air time toward the stations that matter, worth
$4$--$36\%$.

Four directions remain open. \emph{Tighter packing certificates}: the
certified $\rho^\star{=}1/4$ is conservative only because the worst-case
analysis absorbs two compounding factor-of-two losses in series -- dyadic
duration rounding can double the density, and the non-preemptive threshold
(a consequence of the density-1 infeasibility of
Proposition~\ref{prop:ce}) admits requests only up to half the window; the
practical $\rho^\star{=}0.9$ variant never triggers either worst case, so
an amortized packing argument would likely certify a density target near
$0.9$ -- the single highest-leverage next step. \emph{Stochastic arrivals}:
index policies over wake agreements, with asymptotic-optimality machinery.
\emph{Exact solutions beyond $N{=}2$}. \emph{Downlink AoI}: broadcast TWT
with AP-side schedule-aware delivery.

\section*{Acknowledgment}
This material is based upon work supported in part by the National
Science Foundation under Award No.\ CNS-2450832. Any opinions, findings,
and conclusions or recommendations expressed in this material are those
of the authors and do not necessarily reflect the views of the National
Science Foundation.

\bibliographystyle{IEEEtran}
\ifdefined\TECHREPORT\else\balance\fi
\bibliography{refs}

\ifdefined\TECHREPORT
% Full proofs appendix -- included when \TECHREPORT is defined (make techreport)
\appendix

\section{Proofs for the packing analysis}

Fix a unit $u > 0$ with $t_0 = 2^L u$. A \emph{block of size} $s = 2^k u$ is
an interval $[rs, (r{+}1)s)$, $r \in \mathbb{N}$ (aligned). Periods are
$T_i = t_0 2^{j_i}$ (level $j_i$, window $W_j = t_0 2^j$); reservations are
$d'_i = 2^{k_i} u \le t_0$; $U = \sum_i d'_i / T_i$. SF-BF processes levels
in increasing order, duplicating the free set at each level transition
(valid: everything placed so far is $W_j$-periodic), and serves each
request from a minimum-size sufficient free block, carving from the left.

\begin{lemma}[staircase carving]
\label{app:staircase}
Carving a request $q$ from the left end of a block $B = [b, b+s)$,
$q \le s$, leaves $[b+q, b+s)$, which is the disjoint union of exactly one
aligned block of each size $q, 2q, \dots, s/2$ (empty if $q = s$).
\end{lemma}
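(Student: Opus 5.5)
The statement to prove is the staircase carving lemma. Since both $q$ and $s$ are powers of two times $u$ with $q \le s$, set $s = 2^m q$ for some integer $m \ge 0$. The plan is to exhibit the decomposition of the remainder explicitly and then check alignment for each piece.

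\textbf{Trivial case.} If $m = 0$, then $q = s$, the remainder $[b+q, b+s)$ is empty, and the list $q, 2q, \dots, s/2$ is also empty. So the claim holds.

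\textbf{Candidate decomposition.} For $m \ge 1$, define intervals
\[
I_k = [\,b + 2^k q,\; b + 2^{k+1} q\,), \qquad k = 0, 1, \dots, m-1.
\]
Interval $I_k$ has length $2^k q$. Consecutive intervals abut, since the right end of $I_k$ equals the left end of $I_{k+1}$. The first starts at $b+q$ and the last ends at $b + 2^m q = b+s$. Hence the $I_k$ tile $[b+q, b+s)$ disjointly, and their sizes are exactly $q, 2q, \dots, s/2$, one of each. The telescoping sum $\sum_{k<m} 2^k q = s - q$ confirms the measure matches.

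\textbf{Alignment.} Each $I_k$ must be an aligned block, meaning its left endpoint $b + 2^k q$ is a multiple of its size $2^k q$. Since $B$ is aligned of size $s$, we have $b = r s = r\, 2^m q$. This is a multiple of $2^k q$ for every $k \le m-1 < m$. Therefore $b + 2^k q = 2^k q\,(r 2^{m-k} + 1)$, which is an integer multiple of $2^k q$, so $I_k$ is aligned. This is the only step with any content, and it uses that $B$ itself is aligned and that every piece is no larger than $s/2$.

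\textbf{Uniqueness.} The word "exactly" can be read as uniqueness of the multiset of sizes. Any decomposition of $[b+q,b+s)$ into aligned blocks must have its leftmost block start at $b+q$. An aligned block starting there can have size at most $q$, because $b+q$ is an odd multiple of $q$. A greedy maximal-aligned-block argument then forces our decomposition. It suffices, however, to note that the lemma as used (feeding Lemma~\ref{lem:counts}) needs only existence of this particular decomposition, which SF-BF returns by construction.

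I expect no real obstacle. The one point that needs care is using the power-of-two divisibility $2^k q \mid b$, which relies on $k < m$ and on the alignment of $B$.
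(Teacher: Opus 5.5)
Your proof is correct and is essentially the paper's own: you use the same intervals $I_k=[b+2^kq,\,b+2^{k+1}q)$ and the same tiling observation, and you get alignment the same way, from $b$ being a multiple of $s\ge 2^{k+1}q$. One small caveat on your optional uniqueness aside: decompositions of $[b+q,b+s)$ into aligned blocks are not unique, because any $I_k$ of size larger than $u$ splits into two aligned halves and changes the multiset of sizes; so ``exactly'' should be read as describing this particular staircase, which, as you note, is all the count invariant needs.
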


\begin{proof}
With $s = 2^M q$, define $I_l = [b + 2^l q,\, b + 2^{l+1} q)$ for
$l = 0, \dots, M{-}1$. These are disjoint with union $[b+q, b+s)$, and
$I_l$ has size $2^l q$ and left endpoint $b + 2^l q$, a multiple of
$2^l q$ since $b$ is a multiple of $s \ge 2^{l+1} q$.
\end{proof}

\begin{lemma}[count invariant]
\label{app:counts}
(i) A successful SF-BF allocation creates new free blocks only at sizes
whose current count is $0$. (ii) At the start of level-$j$ processing,
every size has count at most $2^j$. Both bounds are tight.
\end{lemma}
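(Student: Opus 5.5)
For (i) I would take Lemma~\ref{app:staircase} as the main tool. Consider a request of size $q$ served from a block of size $b$ that SF-BF picked as a \emph{minimum-size} free block with $b \ge q$. The staircase remainder is exactly one aligned block of each size $q, 2q, \dots, b/2$. All of these sizes lie in $[q, b)$. Best-fit chose the smallest sufficient block, so when the request is served there is no free block of any size in $[q,b)$; otherwise that smaller block would have been chosen instead. Every new block therefore appears at a size whose count was $0$, and each such count rises to exactly $1$. The allocation removes the block of size $b$ and adds nothing else. Hence within a level no count can grow beyond $\max(1, \text{its value at the start of the level})$. Also, no count ever goes from $0$ to more than $1$ in a single allocation.

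For (ii) I would induct on $j$. At the start of level $0$ the window $[0,t_0)$ is one free block of size $t_0$, so each count is at most $1 = 2^0$. Suppose every size has count at most $2^j$ at the start of level $j$. By the observation above, during level-$j$ processing each count stays at most $\max(1, 2^j) = 2^j$. It can only decrease, or increase from $0$ to $1$. At the transition to level $j+1$ the free set is duplicated, valid because everything placed is $W_j$-periodic. Duplication maps each free aligned block in $[0,W_j)$ to one copy in $[0,W_j)$ and one in $[W_j, 2W_j)$. The copies are aligned because $W_j$ is a multiple of every block size ($\le t_0$). So counts at most double, giving at most $2^{j+1}$ at the start of level $j+1$.

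There is one point I must check so that the counts are well defined: free blocks must not merge, i.e.\ I treat the free set as a multiset of aligned blocks and never coalesce them. I would state that SF-BF maintains this list representation. That makes the "count of size $s$" a bookkeeping quantity, and it is exactly what Lemma~\ref{lem:packing} uses.

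For tightness I would give small instances. For (i): a single request $q$ into a block of size $b=2^M q$ creates a count-$1$ block at every size in $[q,b)$, so all the new blocks permitted by the bound do appear. For (ii): at level $0$, place one request of size $t_0/2$ (or smaller) so that one free block of size $t_0/2$ remains. Make no further level-$0$ or level-$1, \dots, j-1$ requests consume it. After $j$ duplications there are exactly $2^j$ free blocks of size $t_0/2$. The main obstacle is keeping the tightness construction consistent with SF-BF's best-fit order, since intermediate-level requests must not consume these blocks. I would handle this by having no requests at intermediate levels, which is allowed because levels may be empty.
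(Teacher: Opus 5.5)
Your proposal is correct and follows essentially the same argument as the paper. Part (i) comes from the staircase lemma plus best-fit minimality of $b$. Part (ii) is an induction: within a level counts only move $0\to 1$, and at the duplication step they at most double. Tightness comes from a level-$0$ request followed by empty intermediate levels. Your extra checks (alignment of the duplicated copies, the non-coalescing multiset representation, and an explicit tightness instance for (i)) are details the paper leaves implicit, and all of them are sound.
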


\begin{proof}
(i) Let $q$ be the request and $b$ the minimum size of a free block with
$b \ge q$. The staircase sizes of Lemma~\ref{app:staircase} lie in
$[q, b)$; by minimality of $b$ no free block has size in $[q, b)$, so each
created size moves from count $0$ to $1$. (ii) Induction: level $0$ starts
from the single block $[0, t_0)$; by (i) the maximum count never increases
during a level; the transition duplicates the free set, doubling counts.
Tightness of (ii): when intermediate levels carry no requests, the
transition doubling compounds unimpeded, so a level-$j$ request can meet
$2^j$ blocks of one size (construction verified mechanically).
\end{proof}

\begin{lemma}[failure bound]
\label{app:failure}
If a level-$j$ request $q$ fails, the free measure satisfies
$|F| < 2^j q$.
\end{lemma}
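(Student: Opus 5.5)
Failure of a level-$j$ request $q$ means that no free block at that moment has size $\ge q$. I would prove the bound by combining this with the count invariant of Lemma~\ref{app:counts}. Since every block is aligned and has a power-of-two size in units of $u$, the only sizes a free block can have are $q/2, q/4, \dots$, and at the smallest, $u$. If $q = u$, there are no admissible sizes at all. Then $|F| = 0 < 2^j q$ and we are done, provided $F$ is really empty. I will take as part of the setup that the free set is always maintained as a disjoint union of aligned blocks, so no fragment smaller than $u$ can exist.

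For general $q$, I would first make sure the count bound from Lemma~\ref{app:counts}(ii) applies at the moment of failure, not only at the start of level $j$. Part (i) gives this: during a level, each successful allocation removes one block of size $b$ and creates blocks only at sizes whose count was $0$. So no size ever has its count pushed above $\max(1,\text{previous max})$ within the level. The count of every size therefore stays at most $2^j$ throughout level-$j$ processing, right up to the failing request.

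Summing over the admissible sizes then gives
\[
|F| \le \sum_{l \ge 1,\; q2^{-l} \ge u} 2^j \, q 2^{-l} \le 2^j q \sum_{l\ge 1} 2^{-l},
\]
and the finite sum is strictly less than $1$, so $|F| < 2^j q$. For the strictness I should be explicit that only finitely many sizes occur (they stop at $u$). Then the geometric sum $1 - q'/q$, where $q'$ is the smallest size appearing, is strictly below $1$.

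The main obstacle is the bookkeeping claim that the free set is exactly a disjoint union of aligned blocks whose counts are the ones tracked by the invariant. The level transition duplicates the free set into the doubled window $W_{j+1}$. I need the translated copies to be aligned blocks again; this holds because $W_j = t_0 2^j$ is a multiple of every block size, which is at most $t_0$. I also need that adjacent free blocks are never merged. SF-BF as described never coalesces blocks, so "block" here means an element of the maintained list, not a maximal free interval. This matches how Lemma~\ref{app:counts} counts, and with it the bound above uses exactly the same objects as the invariant.
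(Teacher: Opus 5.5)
Your proposal is correct and is essentially the paper's argument. On failure every free block has size $q/2, q/4, \dots, u$, each size has count at most $2^j$ by Lemma~\ref{app:counts}, and summing gives $|F| \le 2^j(q-u) < 2^j q$. Your extra bookkeeping is what the paper leaves implicit in the proof of Lemma~\ref{app:counts}: that the count bound persists up to the moment of failure via part (i), and that duplicated blocks stay aligned because $t_0 2^j$ is a multiple of every block size.
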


\begin{proof}
All free blocks have size $\le q/2$; distinct sizes are
$q/2, q/4, \dots, u$, each with count $\le 2^j$
(Lemma~\ref{app:counts}), so $|F| \le 2^j (q - u) < 2^j q$.
\end{proof}

\begin{proof}[Proof of the packing threshold (Lemma~\ref{lem:packing} in the paper)]
Suppose a request $q = d'_i$ at level $j$ fails. The occupied measure
within $[0, W_j)$ is $W_j$ times the density of the stations placed so
far, at most $W_j U$, so $|F| \ge W_j (1 - U) = 2^j t_0 (1 - U) \ge 2^j q$
by hypothesis, contradicting Lemma~\ref{app:failure}. For the converse,
generalize Proposition~\ref{prop:ce}: let level-$0$ stations occupy measure
$(1-\epsilon) t_0$ of the base window with one contiguous free gap of
length $\epsilon t_0$. A station of level $\ge 1$ must place its
contiguous SP inside one periodic copy of the gap (it cannot straddle the
level-$0$ pattern), so $d' \le \epsilon t_0 = t_0 (1 - U_0)$ is necessary,
where $U_0$ is the lower-level utilization.
\end{proof}

\section{Proofs for the approximation chain}

\begin{proof}[Proof of Theorem~\ref{thm:approx} (8-approximation)]
\emph{Lower bound:} any feasible schedule has disjoint SPs, hence density
$\le 1$ and $T_i \ge \tmin_i$, so $\mathrm{OPT} - C \ge J^*$ where $J^*$
solves \eqref{eq:relax} (with the $\dmax$ constraint dropped; it is also
valid by Lemma~\ref{lem:longsp}). \emph{Density scaling:} for $\rho \le 1$,
$T^*(1)/\rho$ is feasible for density target $\rho$ (density scales by
$\rho$, the floors still hold), so $J(T(\rho)) \le J^*/\rho$; we run the
relaxation at $\rho^\star = 1/4$. \emph{Rounding:} round each $T_i$ up to
the grid $t_0 2^j$, $t_0 \le T_{\min}$ (factor $< 2$ on cost; density only
decreases); round each $d_i$ up to a dyadic divisor $d'_i$ of $t_0$
(factor $< 2$, possible since $d_i \le t_0/4 < t_0$). The resulting
density satisfies $U \le 2 \cdot \tfrac14 = \tfrac12$, and
$d'_i \le t_0/2 = t_0(1 - \tfrac12) \le t_0 (1 - U)$. \emph{Packing:}
Lemma~\ref{lem:packing} places everything; each station transmits only $d_i$ inside its
reservation, so true duty cycles satisfy
$d_i / T'_i \le d_i / \tmin_i = \rho_i$ -- reservations cost no energy.
\emph{Chain:} cost $\le 2 \cdot J(T(1/4)) \le 2 \cdot 4 J^* \le
8\,(\mathrm{OPT} - C)$.
\end{proof}

\begin{proof}[Proof of Corollary~\ref{cor:anchor} (candidate-argmin anchor)]
For $\theta \in [0,1)$ let $t_0(\theta) = T_{\min} 2^{-\theta}$ and
$\ell_i(\theta) = 2^{\lceil z \rceil - z}$, $z = y_i + \theta$,
$y_i = \log_2(T_i/T_{\min})$, the per-station rounding factor.
(a) For $\theta$ uniform, $\mathrm{frac}(z)$ is uniform and
$\mathbb{E}[2^{\lceil z\rceil - z}] = \int_0^1 2^{1-f} df = 1/\ln 2$;
by linearity $\mathbb{E}_\theta[C(\theta)] = (1/\ln 2)\sum_i a_i T_i$.
(b) On any interval free of breakpoints
($\theta$ with $\mathrm{frac}(y_i + \theta) = 0$ for some $i$), every
$\mathrm{frac}(y_i+\theta)$ increases without wrapping, so every
$\ell_i$ -- and hence $C$ -- strictly decreases. As $\theta$ approaches a
breakpoint from below, the wrapping station's factor tends to $1$, its
value \emph{at} the breakpoint (an exact power of two rounds to itself),
so $C$ is left-continuous there and attains its infimum at a breakpoint.
The breakpoints are $\theta_c = 1 - \mathrm{frac}(y_c)$, i.e.
$t_0(\theta_c) = T_c/2^{\lceil y_c \rceil}$: exactly the folded
candidates. (c) $\min_{\mathrm{cand}} C = \min_\theta C \le
\mathbb{E}_\theta[C]$.
\end{proof}

\begin{proof}[Proof of Theorem~\ref{thm:uncond} (unconditional)]
By Lemma~\ref{lem:longsp} the relaxation may include $T_i \ge \dmax$. Set
$\hat T = 6\,T^{*\prime}$: density $\le 1/6$, every period $\ge 6\dmax$,
cost $6 J^{*\prime}$. Any folded-candidate anchor satisfies
$t_0 > \hat T_{\min}/2 \ge 3 \dmax$. Round periods (factor $<2$, or
$1/\ln 2$ in the candidate-argmin sense) and reservations
($d_i \le \dmax < t_0/3 < t_0$, factor $< 2$): the density becomes
$U' \le 2/6 = 1/3$ and $d'_i < 2\dmax < \tfrac23 t_0 \le t_0 (1 - U')$,
so Lemma~\ref{lem:packing} packs. Cost $\le 2 \cdot 6\, J^{*\prime} \le 12\,\mathrm{OPT}$.
Optimality of $s = 6$ for this accounting: the packing condition reads
$2\dmax \le \tfrac{s}{2}\dmax\,(1 - \tfrac{2}{s})$, i.e. $s \ge 6$.
\end{proof}

\section{Exact solution for $N = 2$}

Write $J = a_1 T_1 + a_2 T_2$, $T_1 \le T_2$, energy floors
$\tmin_1, \tmin_2$.

\begin{lemma}
\label{app:rational}
Any feasible two-station schedule has $T_2/T_1$ rational.
\end{lemma}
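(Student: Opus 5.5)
The plan is to argue by contradiction: assume the ratio $r = T_2/T_1$ is irrational, and use equidistribution (or mere density) of the relative phases of the two wake sequences to force an overlap. Station~1 occupies $\bigcup_k [\phi_1 + kT_1, \phi_1 + kT_1 + d_1)$ and station~2 occupies $\bigcup_m [\phi_2 + mT_2, \phi_2 + mT_2 + d_2)$. Reducing station~2's wake starts modulo $T_1$, I will write
\[
x_m := (\phi_2 - \phi_1 + m T_2) \bmod T_1 \in [0, T_1).
\]
The SP of station~2 starting at $\phi_2 + mT_2$ avoids every SP of station~1 exactly when $x_m$ lies in the ``safe arc'' $[d_1,\, T_1 - d_2]$ of the circle $\mathbb{R}/T_1\mathbb{Z}$, read cyclically. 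This is the pointwise characterization I need first: the SP must start after the end of the preceding station-1 SP and end before the start of the next one. I will check the boundary conventions carefully, since SPs are half-open intervals, so touching endpoints is allowed.

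Next I invoke the classical fact that if $T_2/T_1 \notin \mathbb{Q}$, the rotation of the circle $\mathbb{R}/T_1\mathbb{Z}$ by $T_2$ is minimal. Hence $\{x_m : m \ge 0\}$ is dense in $[0, T_1)$; Kronecker/Weyl gives even equidistribution, but density suffices. The complement of the safe arc is the open-ish arc of phases in $(T_1 - d_2, T_1) \cup [0, d_1)$ modulo $T_1$. This arc has length $d_1 + d_2 > 0$, since both SP durations are positive. A dense sequence must therefore enter it for some $m$. That $m$ gives an overlap between station~2's $m$-th SP and some SP of station~1, contradicting feasibility. So $T_2/T_1$ is rational.

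The main obstacle is modest: I must make sure the forbidden set genuinely contains an open interval of the circle, so that density applies. Because the half-open intervals make the forbidden set $(T_1 - d_2, T_1 + d_1)$ modulo $T_1$ when measured properly, it is a nonempty open arc of length $d_1 + d_2$. I also need to handle the degenerate case where this arc covers the whole circle, i.e.\ $d_1 + d_2 \ge T_1$. That case is infeasible outright and consistent with the later requirement $T_1 \ge q(d_1 + d_2)$. I note that Lemma~\ref{lem:longsp} is not needed here; it is the density argument alone that rules out irrational ratios.
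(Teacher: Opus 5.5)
Your proposal is correct and is essentially the paper's proof: the paper notes that for irrational $T_2/T_1$ the differences $jT_2 - kT_1$ ($j,k \in \mathbb{N}$) are dense, so some pair of SP start times falls in the overlap window of width $d_1 + d_2$, which is your forbidden-arc argument phrased on $\mathbb{R}$ rather than on the circle $\mathbb{R}/T_1\mathbb{Z}$. Two cosmetic fixes: first, every tail $\{x_m : m \ge m_0\}$ of the orbit is still dense, so you may take $m$ large enough that the colliding station-1 SP has index $k \ge 0$, since the wake sequences start at $k=0$. Second, drop the separate degenerate case, because the forbidden set is always a nonempty open subset of the circle; at $d_1 + d_2 = T_1$ it is the circle minus one point, and that case is not ``infeasible outright'' (e.g.\ $T_2 = T_1$ with back-to-back SPs is feasible, consistent with $T_1 \ge q(d_1+d_2)$ at $q=1$).
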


\begin{proof}
If $T_2/T_1$ is irrational, the differences
$\{ j T_2 - k T_1 : j, k \in \mathbb{N} \}$ are dense in $\mathbb{R}$ (an
irrational rotation equidistributes), so some pair of SP start times
differs by a value in $(-d_1, d_2)$ -- and those two SPs overlap.
\end{proof}

\begin{lemma}
\label{app:phase}
Let $T_2/T_1 = p/q$ reduced, $p \ge q \ge 1$. The schedule is feasible iff
$T_1 \ge q (d_1 + d_2)$.
\end{lemma}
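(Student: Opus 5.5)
The plan is to reduce disjointness to a one-dimensional lattice-avoidance condition on the relative offset $\phi := \phi_2 - \phi_1$. Two SPs, one of station~1 starting at $\phi_1 + kT_1$ and one of station~2 starting at $\phi_2 + jT_2$, are disjoint exactly when the start difference $\Delta_{jk} = \phi + jT_2 - kT_1$ lies outside the open interval $(-d_2, d_1)$. If $\Delta \in [0,d_1)$, station~2 starts inside station~1's SP; if $\Delta \in (-d_2,0)$, station~1 starts inside station~2's SP. Otherwise the half-open intervals do not meet. So feasibility is equivalent to
\[
\phi + \{\, jT_2 - kT_1 : j,k \,\} \cap (-d_2, d_1) = \emptyset .
\]

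The next step is to identify the set of differences. Set $g := T_1/q = T_2/p$. Then $jT_2 - kT_1 = g\,(jp - kq)$. Because $\gcd(p,q)=1$, B\'ezout's identity makes $jp - kq$ range over all of $\mathbb{Z}$ as $j,k$ range over $\mathbb{Z}$.

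One technicality: the paper indexes wakes by $k,j \in \mathbb{N}$, not $\mathbb{Z}$. I would handle this by noting that the schedule is periodic with hyperperiod $qT_2 = pT_1$ and that we consider the steady-state timeline. Alternatively, any integer combination $jp - kq = m$ has solutions with $j,k$ arbitrarily large, obtained by adding multiples of $(q,p)$. Either way, the difference set is exactly the lattice $g\mathbb{Z}$.

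Feasibility therefore reduces to: some coset $\phi + g\mathbb{Z}$ avoids an open interval of length $d_1 + d_2$. I expect this identification, including the $\mathbb{N}$-versus-$\mathbb{Z}$ point, to be the only real obstacle; the rest is elementary.

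The conclusion follows in two directions.
\begin{itemize}
\item \emph{Sufficiency.} If $g \ge d_1 + d_2$, choose $\phi = d_1$. The lattice points near the interval are then $d_1$ and $d_1 - g \le -d_2$, so no point of $\phi + g\mathbb{Z}$ lies in $(-d_2, d_1)$.
\item \emph{Necessity.} If $g < d_1 + d_2$, every half-open interval of length $g$ contains a point of the coset $\phi + g\mathbb{Z}$. The open interval $(-d_2, d_1)$ has length greater than $g$, so it contains such a half-open subinterval and hence meets the coset for every $\phi$.
\end{itemize}
Since $g = T_1/q$, the condition $g \ge d_1 + d_2$ is exactly $T_1 \ge q(d_1 + d_2)$. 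For $q = 1$ (harmonic ratio) the condition reduces to $T_1 \ge d_1 + d_2$, which is consistent with Lemma~\ref{lem:longsp}.
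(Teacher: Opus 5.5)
Your proof is correct and is essentially the paper's argument. The paper reduces station~2's wake starts modulo $T_1$ to $q$ equally spaced phases at spacing $T_1/q$ and requires station~1's SP to fit in a gap of length $T_1/q - d_2$; this is the same as your coset $\phi + g\mathbb{Z}$ avoiding the window $(-d_2, d_1)$ of length $d_1 + d_2$, and your version is more careful in making that overlap window explicit and in handling the $\mathbb{N}$-indexing of wakes via shifts by $(q,p)$, both of which the paper's sketch leaves implicit.
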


\begin{proof}
Since $\gcd(p,q) = 1$, the starts of station~2's SPs modulo $T_1$ form $q$
equally spaced phases (spacing $T_1/q$). Station~1's contiguous SP cannot
straddle an SP of station~2, so it must fit in one inter-phase gap of
length $T_1/q - d_2$: feasibility $\iff T_1/q - d_2 \ge d_1$. Sufficiency:
place $\phi_1$ in any gap.
\end{proof}

\begin{theorem}[exact $N{=}2$]
\label{app:n2}
Let $J(r, q) = (a_1 + a_2 r)\max(\tmin_1,\, \tmin_2/r,\, q(d_1{+}d_2))$ and
\begin{equation*}
r_q^* = \max\!\big(1,\ \tfrac{\tmin_2}{\max(\tmin_1,\, q(d_1{+}d_2))}\big),
\quad
Q = \big\lceil \tfrac{J(\lceil r_1^* \rceil, 1)}{(a_1{+}a_2)(d_1{+}d_2)} \big\rceil.
\end{equation*}
Then the optimum equals
$\min_{q \le Q} \min_{p} J(p/q, q)$ over
$p \in \{\lfloor q r_q^* \rfloor, \lceil q r_q^* \rceil\}$ with
$p \ge q$ -- an $O(Q)$ exact algorithm.
\end{theorem}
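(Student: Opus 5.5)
The plan is to reduce the two-station problem to a one-parameter family indexed by the reduced denominator $q$. Then, for each fixed $q$, we optimize over the numerator $p$ by a quasiconvexity argument, and finally truncate $q$ using a cost lower bound.

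\emph{Reformulation.} By Lemma~\ref{app:rational}, every feasible schedule has $T_2/T_1 = r = p/q$ in lowest terms with $p \ge q \ge 1$; the case $T_1 \le T_2$ is without loss of generality after relabeling, or we run both orientations. By Lemma~\ref{app:phase}, feasibility is exactly $T_1 \ge q(d_1+d_2)$. Together with the energy floors $T_1 \ge \tmin_1$ and $T_2 = rT_1 \ge \tmin_2$, this says that for fixed $(p,q)$ the admissible $T_1$ form a ray. The cost $a_1T_1 + a_2T_2 = (a_1 + a_2 r)T_1$ is increasing in $T_1$, so the optimum takes $T_1$ at the ray's endpoint, giving cost exactly $J(r,q)$. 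Hence $\mathrm{OPT} = \min_{q}\min_{p \ge q,\ \gcd(p,q)=1} J(p/q,q)$.

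\emph{Dropping coprimality.} If $p/q$ is not reduced, with reduced form $p'/q'$ and $q' < q$, then $J(p/q,q) \ge J(p'/q',q')$, since $J$ is nondecreasing in $q$ at fixed $r$. So we may minimize over all integers $p \ge q$.

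\emph{Fixed $q$.} Let $M_q = \max(\tmin_1, q(d_1+d_2))$. For $r \le \tmin_2/M_q$ we have $J = (a_1 + a_2 r)\tmin_2/r = a_1\tmin_2/r + a_2\tmin_2$, which is strictly decreasing in $r$. For $r \ge \tmin_2/M_q$ we have $J = (a_1 + a_2 r)M_q$, which is strictly increasing in $r$. Thus $J(\cdot,q)$ is quasiconvex on $r \ge 1$, with minimizer $r_q^*$ as stated; the clip at $1$ enforces $p \ge q$. Over the grid $r = p/q$, the minimum is attained at one of the two neighbors $\lfloor q r_q^*\rfloor$ and $\lceil q r_q^*\rceil$ (discarding any with $p<q$): by quasiconvexity, values to the left of $\lfloor\cdot\rfloor$ are larger, and values to the right of $\lceil\cdot\rceil$ are larger.

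\emph{Bounding $q$.} For any $q$ and $r \ge 1$, we have $J(r,q) \ge (a_1+a_2)\,q(d_1+d_2)$. The candidate $(\lceil r_1^*\rceil, 1)$ is feasible, so $\mathrm{OPT} \le J(\lceil r_1^*\rceil,1)$. Any $q > Q$ therefore gives cost strictly above this value and can be discarded; a short check is needed for the ceiling-equality boundary, where ties are harmless.

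Each $q \le Q$ costs $O(1)$ to evaluate, giving the $O(Q)$ algorithm.

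The main obstacle is the coprimality/monotonicity step together with the boundary $p \ge q$ in the neighbor claim. One must check that the rounding neighbors, when they fall below $q$, are correctly replaced by $p = q$; the clip $r_q^* \ge 1$ handles this, since then $q r_q^* \ge q$ and its floor is at least $q$.
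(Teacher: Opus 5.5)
Your proposal is correct and follows the paper's proof essentially step for step: you put $T_1$ at its active floor via Lemmas~\ref{app:rational}--\ref{app:phase}, use quasiconvexity of $J(\cdot,q)$ to reduce to the two grid neighbors of $r_q^*$, truncate at $Q$ via $J(r,q)\ge(a_1+a_2)\,q(d_1+d_2)$, and dismiss non-reduced fractions because $J$ is nondecreasing in $q$ at fixed $r$. Your explicit remarks that the clip $r_q^*\ge 1$ forces $\lfloor q r_q^*\rfloor\ge q$, and that the $T_1\le T_2$ orientation must be assumed or both orientations run, are small clarifications the paper leaves implicit.
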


\begin{proof}
Lemmas~\ref{app:rational}--\ref{app:phase} reduce the problem to
$\min J(p/q, q)$ over reduced rationals with the stated feasible region:
for fixed $r$ and $q$ the objective increases in $T_1$, so the optimal
$T_1$ is the smallest feasible value -- the active maximum inside $J$. For
fixed $q$, $J(\cdot, q)$ is quasiconvex on $[1, \infty)$: while
$\tmin_2/r$ is the active maximum, $J = a_1 \tmin_2/r + a_2 \tmin_2$
strictly decreases; once the maximum is constant in $r$, $J$ strictly
increases. Hence the integer numerator minimum is attained at one of the
two grid neighbors of the continuous minimizer $r_q^*$. Finally
$J(r, q) \ge (a_1 + a_2)\, q (d_1 + d_2)$, so all
$q > Q$ are dominated by the $q = 1$ candidate defining $Q$.
(Non-reduced fractions replicate a smaller-$q$ ratio with a tighter
packing floor and are dominated.) Non-integer optimality:
$\tmin_1 = 10$, $\tmin_2 = 15$, $d_1 + d_2 = 1$ gives ratio $3/2$ cost
$10 a_1 + 15 a_2$, beating every integer ratio; continuity extends the
strict gap to a neighborhood.
\end{proof}

\fi

\end{document}